\documentclass[aps,pra,twocolumn,groupedaddress,amsmath,amssymb,10pt,nofootinbib]{revtex4-1}

\usepackage[dvipdfm]{hyperref} 

\usepackage{amssymb} \usepackage{amsmath} \usepackage{amsthm}
\usepackage{graphicx} \usepackage{amstext}
\usepackage[utf8]{inputenc}
\usepackage{mathtools}

\usepackage{amsmath}
\usepackage{amssymb}
\usepackage{amsthm}

\usepackage{fontenc}

\usepackage{amsmath,amssymb,amsthm, graphicx, array,epsfig, exscale, dsfont, amsfonts, verbatim, fancyhdr, natbib, bbm,mathrsfs,textcomp,listings, epsfig}

\usepackage[usenames,dvipsnames]{color}
\usepackage[capitalize]{cleveref}

\theoremstyle{plain}
\newtheorem{theorem}{Theorem}[section]
\newtheorem{lemma}[theorem]{Lemma}

\newtheorem{proposition}[theorem]{Proposition}
\theoremstyle{definition}
\newtheorem{definition}[theorem]{Definition}
\newtheorem{remark}[theorem]{Remark}

\newcommand{\one}{\mbox{$1 \hspace{-1.0mm}  {\bf l}$}}

\newcommand{\bra}[1]{\left\langle{#1}\right\vert}
\newcommand{\ket}[1]{\left\vert{#1}\right\rangle}

\newcommand{\myeps}{\varepsilon}
\newcommand{\epsbar}{\overline{\myeps}}

\newcommand{\erk}{\hat{\myeps}}
\newcommand{\etr}{\overline{\myeps}-\hat{\myeps}}
\newcommand{\soproj}{\Pi}

\DeclareMathOperator{\tr}{tr}
\newcommand{\ketbra}[2]{{|#1\rangle\!\langle#2|}}
\newcommand{\pr}[1]{{\ketbra{#1}{#1}}}
\newcommand{\ball}[2]{{\cal B}^{#1}\left(#2\right)}
\newcommand{\dist}[1]{\frac{1}{2}\left|\left|#1\right|\right|_1} 
\newcommand{\eps}{\myeps}

\newcommand{\ren}[3]{S_{#1}^{#2}\left(#3\right)}



\newcommand{\leak}{\mathrm{leak}_{\mathrm{EC}}} 

\newcommand{\cqball}[2]{{\cal B}^{#1}_{\textrm{cq}}\left(#2\right)}

\newcommand{\rhoAB}{\rho_{AB}}

\date{\today}


\begin{document}

\title{QKD with finite resources: secret key rates via R\'enyi entropies}
\author{Silvestre Abruzzo}
\email{abruzzo@thphy.uni-duesseldorf.de}
\author{Hermann Kampermann}
\author{Markus Mertz}
\author{Dagmar Bru{\ss}}
\affiliation{Institute for Theoretical Physics III, Heinrich-Heine-universit\"at D\"usseldorf, 40225 D\"usseldorf, Germany.}

\begin{abstract}
A realistic Quantum Key Distribution (QKD) protocol necessarily deals with  finite resources, such as the number of signals exchanged by the two parties. We derive a bound on the secret key rate which is expressed as an optimization problem over R\'enyi entropies. Under the assumption of collective attacks by an eavesdropper, a computable estimate of our bound for the six-state protocol is provided. This bound leads to improved key rates in comparison to previous results.
\end{abstract}
\maketitle

\section{Introduction}
Quantum Key Distribution (QKD) is a method for transmitting a secret key between two partners.   Since its initial proposal \cite{bennett1984quantum} QKD has reached maturity through many theoretical developments and experimental realizations. Moreover, in the last few years  QKD has entered  the commercial market \cite{shields2007key} and small QKD networks were realized \cite{tokyo:qkd,sudafrica:qkd}.

A significant figure of merit in QKD is the \emph{secret key rate}, i.e. the ratio between  the length of the secure key and  the initial number of signals. There is a big difference between the key rate calculated under the assumption  that the key is composed of an infinite number of bits, and a key in real applications, where the number of bits is finite. In recent years a new paradigm for security in the finite key setting was developed \cite{Renner:2005pi,Renner:kl, renner05, Christandl:ye}.  However, the complexity of the entropic quantities involved in the formalism only permits to find bounds on the optimal quantities, which leads to much lower key rates for a small number of signals with respect to the asymptotic ones. 

To our knowledge the first work dealing with finite key corrections is \cite{mayers2001unconditional}. The currently used framework for finite-key analysis was developed in \cite{renner05, Renner:kl}.  The bound proved in \cite{renner05} was used by T.~Meyer~et al. \cite{tim06} to calculate the key rate in the finite-key scenario. In \cite{Scarani:2008ve, Scarani:2008ys} security bounds for the BB84 and the six-state protocol were provided using an easily calculable  bound for the smooth min-entropy.  Recently, many efforts were done for improving the bounds on the secret key rates for a finite amount of resources, e.g. using the connection between the min-entropy and the guessing probability \cite{Koenig:2008qf, bratzik2010min}. So far the secret key rates provided are only proven to be secure for collective attacks.  A possible approach for providing security against coherent attacks using the results against collective attacks can be obtained by post-selections techniques \cite{Christandl:2009dq, RennerPostNato, sheridan2010finite} or the exponential de Finetti theorem \cite{Renner:kl}. A recent technique is given by  uncertainty relations for the smooth min-entropy \cite{Tomamichel2010, TomLim2011}. This last approach is very promising because it provides an easily calculable tight bound on the key rate even for coherent attacks, however it is not easily applicable to the six-state protocol.
A step in the direction of considering more practical issues  in addition to finite-key corrections (BB84 with and without decoy states and entanglement-based  implementations)  was provided in \cite{Scarani:2009wr, li2009security}. 
\\
\\
In this paper, we present a bound on the achievable key length for the  six-state protocol. The presented bound is resorted from \cite[Lemma 9]{renner05}, where it is used for bounding the key length in terms of smooth R\'enyi entropies. 
We calculate explicitly the presented bound under the assumptions of collective attacks and the depolarizing channel. The calculated key rates  lead for small number of signals  to better key rates  than the  bounds derived in \cite{tim06,Scarani:2008ve, Scarani:2008ys}.
\\
\\
The paper  is organized as follows. In \cref{Sec:DescProt} we present the protocol we are going to study. In \cref{sec:def} we introduce definitions and our notation. In \cref{Sec:security}  we explain the approach developed in this paper and we show how to estimate the proposed bound for the achievable key rate. In \cref{sec:result}  we compare the proposed bound with other relevant bounds present in the literature. \cref{sec:conclusion} contains the conclusions. In the appendices we prove additional results used in the paper.

\section{\label{Sec:DescProt} Description of the protocol}

In this paper we consider the entanglement-based version of the  \emph{six-state protocol} \cite{bruss1998optimal,6stategisin}. The protocol consists of the following steps.

{\bf State preparation and distribution:} Alice prepares N entangled Bell states and distributes one part of each pair to Bob. We assume that Eve performs at most a collective attack, i.e. the adversary acts on each of the signals independently and identically.

{\bf Reduction to Bell-diagonal form:}
Alice and Bob apply randomly and simultaneously one of the operators $\{\one, \sigma_X, \sigma_Y, \sigma_Z\}$ and as a result they obtain a Bell-diagonal state  the entries of which are directly connected with the Quantum Bit Error Rates (QBER), \cite{scarani:2009} and  \cref{sec:usedoper}. 

{\bf Sifting and Measurement:}
Alice and Bob measure at  random one of the three Pauli operators. The Pauli operators are chosen with different probabilities. We consider that $\sigma_{X}$ and $\sigma_{Y}$ are chosen with the same probability and that $\sigma_{Z}$ is chosen such that $\mathrm{Pr}(\sigma_{Z})\geq \mathrm{Pr}(\sigma_{X})$. This biased setting \cite{lo2005efficient} is advantageous in terms of sifting.  At the end of the measurement process, Alice and Bob broadcast  the choice of the bases through  the classical channel and  discard the results coming from a different choice of the measurement basis. We call $n'=n'_{X}+n'_{Y}+n'_{Z}$ the length of the sifted key shared by Alice and Bob, where $n'_{i}$ with $i=X, Y, Z$ is the remaining number of signals when both Alice and Bob measure $\sigma_{X}, \sigma_{Y},\sigma_{Z}$.

{\bf Parameter estimation:}
Parameter estimation (PE) permits to measure the amount of errors on the key, which in the security analysis are assumed to be introduced via Eve's eavesdropping. In the six-state protocol three bases are used for the measurement and therefore a QBER in each direction is calculated by Alice and Bob. Practically speaking, Alice broadcasts for each basis $m_{i}<n'_{i}$ bits of the sifted key on the classical channel. Bob compares these outcomes with his corresponding outcomes and calculates the QBERs $e_{m}^{i}$ as the ratio between the number of discordant positions and the length of the transmitted strings. In general $e_{m}^{X}\neq e_{m}^{Y}\neq e_{m}^{X}$. For calculating explicitly the bound that we are going to propose we use the biggest QBER as measured QBER, denoted as $e_{m}$. Note that it is possible to introduce additional symmetrizations \cite{lo2001proof,kraus:2005kx}  that reduce the initial state to a state described by only one parameter: the QBER.  However those symmetrizations require additional experimental means that could be difficult to implement.

 The remaining $n:=n'-m_{X}-m_{Y}-m_{Z}$ bits will be used for the extraction of the key. The QBER $e$ is bounded by parameter estimation developed in \cite{Scarani:2009wr,Scarani:2008ve,Scarani:2008ys,bratzik2010min}.
The parameter $\myeps_{PE}$ represents the probability that  we underestimated the real QBER.

The QBER of the key $e$ with probability $1-\myeps_{PE}$ is such that\cite{Scarani:2009wr,Scarani:2008ve,Scarani:2008ys,bratzik2010min}
\begin{equation}
\label{eq:pa:boundQBER}
 e \leq  e_{m} +  2\zeta\left(\myeps_{PE},m\right)
\end{equation}
with
\begin{equation}
\zeta(\myeps_{PE},m):=\sqrt{\frac{\ln{\left(\frac{1}{\myeps_{PE}}\right)}+2\ln{(m+1)}}{8m}}.
\end{equation}

{\bf Error correction:}
 Alice and Bob hold correlated classical bit strings $X^n$ and $Y^n$. The purpose of an error correction (EC) protocol is to create a fully correlated string, while leaking only a small amount of information to an adversary. In the following, we will consider realistic error correction protocols. The number of bits leaked during the classical communication to an eavesdropper is given by \cite{Scarani:2009wr, Scarani:2008ve}
 \begin{equation}
\label{eq:leakec}
 \leak = f_{\mathrm{EC}}nh(e)+\log{\left(\frac{2}{\eps_{EC}}\right)},
\end{equation} where $f_{\mathrm{EC}}\gtrsim1$ depends on the used EC protocol, $h(e)$ is the binary Shannon entropy, i.e.  $h(e)=-e\log{e}-(1-e)\log{(1-e)}$ and $e$ is the QBER. Here, $\myeps_{EC}$ is the probability that Alice's and Bob's strings differ after the error correction step.

 {\bf Privacy amplification:} 
  Let Alice and Bob hold a perfectly correlated bit string $X^{n}$, on which Eve might have some information.  The purpose of privacy amplification is to shrink the length of  $X^{n}$ in order to reduce  Eve's information on the resulting string. 

Practically, Alice chooses at random a two-universal hash function (Definition~\ref{defi:hash} in \cref{proof:main:theo})  and communicates it to Bob. 

\vskip 1cm

\section{Definitions and Notation\label{sec:def}}
The set of quantum states, which are normalized positive semidefinite bounded operators, will be represented by $S\left(\mathcal{H}\right)$, where $\mathcal{H}$ stands for a finite-dimensional Hilbert space. In the following $\rho_A(\rho_B)$ belongs to the set of bounded operators which act on the Hilbert space $\mathcal{H}^A(\mathcal{H}^B)$. For a given state $\rhoAB$, the states $\rho_A$, $\rho_B$ are defined via the partial trace, i.e. $\rho_A:=\tr_{B}\rho_{AB}$ and $\rho_B:=\tr_{A}\rho_{AB}$.

In this paper, we will consider R\'enyi entropies, which are a generalization of the Von Neumann entropy.

\begin{definition}(R\'{e}nyi entropies\cite{renyi1961measures, renner05}) Let $\alpha \in \mathbb{R} \cup \{\infty\}$ and $\rho, \sigma \in \mathrm{S}\left(\mathcal{H}\right)$. The R\'{e}nyi entropy of order $\alpha$ is defined as \footnote{$\log:=\log_2$} 
   \begin{equation}
   \ren{\alpha}{}{\rho}:=\frac{1}{1-\alpha}\log{\left(\tr{\left(\rho^{\alpha}\right)}\right)}.
   \end{equation} In particular, we get
   \begin{eqnarray}
   \ren{0}{}{\rho}           &=& \log{\left(\mathrm{rank}(\rho)\right)} \\
   \ren{2}{}{\rho}           &=& -\log{\left(\tr({\rho^2})\right)} \\
   \ren{\infty}{}{\rho}      &=& -\log{\left(\lambda_{\mathrm{max}}\left(\rho\right)\right)}
   \end{eqnarray} where $\lambda_{\mathrm{max}}\left(\rho\right)$ is the maximal eigenvalue of $\rho$.
\end{definition}

Another useful quantity  is the smooth R\'enyi entropy, which is the R\'enyi entropy optimized on a set of operators which are $\myeps$-close to the operator involved in the actual computation. 
 We define an $\eps{}$-environment via the trace-distance in the following way \cite{renner05}:
  \begin{definition}($\myeps$-environment) Let $\eps{}\geq 0$ and $\rho \in S\left(\mathcal{H}\right)$, then
   \begin{equation}
   \ball{\eps{}}{\rho}:=\left\lbrace \sigma \in S\left(\mathcal{H}\right): \dist{\sigma -\rho}\leq \eps{} \right\rbrace
   \end{equation}
  where $\left|\left|A\right|\right|_1=\tr \sqrt{AA^\dagger}$.
  \end{definition}
  
  \begin{definition}\label{def:smoothrenent}The smooth R\'{e}nyi entropy of order $\alpha$ is defined (following \cite{renner05}) as   \begin{equation}
   \ren{\alpha}{\eps{}}{\rho}:=\frac{1}{1-\alpha}\inf_{\sigma \in \ball{\frac{\eps{}}{2}}{\rho}}\log{\left(\tr{\left(\sigma^{\alpha}\right)}\right)}.
   \end{equation}
 \end{definition}
 
 The main result presented in this paper will be expressed as an optimization problem on a \emph{classical-quantum $\myeps$-environment} of a certain operator.  
 
 \begin{definition}\label{def:cqstates} (Classical-quantum(cq)-state) Let $\{\ket{x}\}$ be an orthonormal basis of $\mathcal{H}^{X}$ and moreover let $\mathcal{H}^{A}$ be a generic Hilbert space. We define the state $\rho_{XA}$ which is classical on $\mathcal{H}^{X}$ and quantum on $\mathcal{H}^{A}$ as the state
\begin{equation*}
\rho_{XA}=\sum_{x}P_{X}(x)\ket{x}\bra{x}\otimes\rho_{A}^{x},
\end{equation*}
where $\rho_{A}^{x}\in\mathcal{S}(\mathcal{H}^{A})$ and $P_{X}(x)$ is a classical probability distribution.
\end{definition}
Finally, we define the  \emph{classical-quantum $\myeps$-environment} as the space
\begin{align*}
\cqball{\myeps}{\rho_{XA}}&:=\{\sigma_{XA}\in\ball{\myeps}{\rho_{XA}}:\\ &\quad\quad\quad
\sigma_{XA}=\sum_{x}P_{X}(x)\ket{x}\bra{x}\otimes\sigma_{A}^{x}\},
\end{align*}
where $\sigma_{A}^{x}\in\mathcal{S}(\mathcal{H}^{A})$ and $P_{X}(x)$ is a classical probability distribution.
Finally, we recall the composable definition of security introduced by Renner in \cite{renner05}. For additional details see \cite{muller2009composability}.

\begin{definition}\label{def:securitydef} Let $\rho_{KE}$ be the cq-state describing the classical key $K$ of length $\ell$, distilled at the end of a QKD protocol, correlated with the quantum states of the eavesdropper $\rho_{E}$. The state $\rho_{KE}$  is said to be $\myeps$-secure if
\begin{equation}
\label{def:security}
\frac{1}{2}\|\rho_{KE}-\frac{1}{2^\ell}\one\otimes\rho_{E'}\|_{1}\leq\myeps,
\end{equation}
\end{definition}
where $\rho_{E'}$ is the quantum state of an eavesdropper not correlated with the key.

In the literature several bounds on an $\myeps$-secure key length \cite{renner05, Renner:kl, Tomamichel:2010fk} were presented.

\section{ \label{Sec:security} Bound on the achievable key length}

The following bound was inspired by \cite[Theorem 4]{renner05} where it was used as a bridge for providing an analogous bound in terms of smooth R\'enyi entropies.

 \begin{theorem}\label{theorem:PADEP}Let $\rho_{X^{n}E^{n}}$ be the cq-state describing Alice's bitstring $X^{n}$ as well as Eve's quantum information represented by $\rho_{E^n}$. Let $\epsbar, \myeps_{PA}\geq0$.  If the length $\ell$ of the key is such that
   \begin{align}
\label{for:depopt}
   \ell \leq &\sup_{\sigma_{X^nE^n}\in \cqball{\frac{\overline{\eps}}{2}}{\rho_{X^nE^n}}}{\left(\ren{2}{}{\sigma_{X^nE^n}}-\ren{0}{}{\sigma_{E^n}}\right)} \nonumber\\&\quad\quad\quad -\leak +2\log\left(2\myeps_{PA}\right),
   \end{align}
  then the key is $\epsbar+\myeps_{PA}$-secure.
    \end{theorem}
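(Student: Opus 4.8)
The plan is to prove the theorem from three ingredients: Renner's quantum privacy-amplification (leftover-hash) estimate in collision-entropy form \cite[Lemma~9]{renner05}, a smoothing step carried out over the classical--quantum ball $\cqball{\epsbar/2}{\rho_{X^nE^n}}$ rather than the general ball, and elementary bookkeeping for the error-correction communication. Throughout I write $F$ for the public label of the two-universal hash function $f\colon\{0,1\}^n\to\{0,1\}^\ell$ (Definition~\ref{defi:hash}), $C$ for the one-way error-correction communication sent by Alice --- a function of $X^n$ and of public randomness, amounting to at most $\leak$ bits, which I place into Eve's classical side information --- and $K=f(X^n)$ for the extracted key.

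The first step is to record privacy amplification in the form I need: for an arbitrary cq-state $\sigma_{XE}$ with $X$ ranging over $\{0,1\}^n$,
\begin{equation*}
\mathbb{E}_{f}\!\left[\tfrac12\bigl\|\sigma_{f(X)E}-\tfrac{1}{2^\ell}\one\otimes\sigma_E\bigr\|_1\right]\;\le\;\tfrac12\,2^{-\frac12\left(\ren{2}{}{\sigma_{XE}}-\ren{0}{}{\sigma_E}-\ell\right)}.
\end{equation*}
This is the quantum leftover-hash bound of \cite{renner05}, which controls the left-hand side by $\tfrac12\,2^{-\frac12(H_2(X|E)_\sigma-\ell)}$ for the conditional collision entropy computed with respect to an arbitrary reference state on $E$; choosing as reference the normalised projector $\Pi_E/\mathrm{rank}(\sigma_E)$ onto $\mathrm{supp}(\sigma_E)$ reproduces exactly $H_2(X|E)_\sigma=\ren{2}{}{\sigma_{XE}}-\ren{0}{}{\sigma_E}$, since on $\mathrm{supp}(\sigma_E)\supseteq\bigcup_x\mathrm{supp}(\sigma_E^x)$ that reference acts as the scalar $\mathrm{rank}(\sigma_E)^{1/4}$ and thereby turns the defining trace into $\mathrm{rank}(\sigma_E)\,\tr(\sigma_{XE}^2)=2^{\ren{0}{}{\sigma_E}-\ren{2}{}{\sigma_{XE}}}$.

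The second step runs error correction and privacy amplification on a smoothed state. Pick $\sigma_{X^nE^n}\in\cqball{\epsbar/2}{\rho_{X^nE^n}}$ attaining (or approaching) the supremum in \eqref{for:depopt}, then append $C$ and $F$ and hash. Appending the classical register $C$ cannot decrease $\ren{2}{}{\cdot}$, because $\tr(\sigma_{X^nCE^n}^2)\le\tr(\sigma_{X^nE^n}^2)$ by positivity of the dropped cross terms, and it raises $\ren{0}{}{\cdot}$ by at most $\leak$, because $\mathrm{rank}(\sigma_{CE^n})\le 2^{\leak}\,\mathrm{rank}(\sigma_{E^n})$; hence
\begin{equation*}
\ren{2}{}{\sigma_{X^nCE^n}}-\ren{0}{}{\sigma_{CE^n}}\;\ge\;\ren{2}{}{\sigma_{X^nE^n}}-\ren{0}{}{\sigma_{E^n}}-\leak\;\ge\;\ell-2\log\left(2\epspa\right),
\end{equation*}
the last inequality being the hypothesis on $\ell$ together with the choice of $\sigma_{X^nE^n}$. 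Inserting this into the estimate of the first step, with $(C,E^n)$ in the role of Eve's system, gives $\mathbb{E}_f\!\left[\tfrac12\|\sigma_{KCE^n}-\tfrac{1}{2^\ell}\one\otimes\sigma_{CE^n}\|_1\right]\le\tfrac12\,2^{\log(2\epspa)}=\epspa$, which, $F$ being classical, equals $\tfrac12\|\sigma_{KCE^nF}-\tfrac{1}{2^\ell}\one\otimes\sigma_{CE^nF}\|_1$.

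It remains to remove the smoothing. Error correction and privacy amplification act jointly as one fixed channel on $X^n$ and the public randomness, so $\rho_{KCE^nF}$ and $\sigma_{KCE^nF}$, and likewise their marginals $\rho_{CE^nF}$ and $\sigma_{CE^nF}$, are images under a common CPTP map of $\rho_{X^nE^n}$ and $\sigma_{X^nE^n}$; contractivity of the trace norm then bounds $\tfrac12\|\rho_{KCE^nF}-\sigma_{KCE^nF}\|_1$ and $\tfrac12\|\rho_{CE^nF}-\sigma_{CE^nF}\|_1$ each by $\epsbar/2$. The triangle inequality now yields $\tfrac12\|\rho_{KCE^nF}-\tfrac{1}{2^\ell}\one\otimes\rho_{CE^nF}\|_1\le\epsbar/2+\epspa+\epsbar/2=\epsbar+\epspa$, which is the $(\epsbar+\epspa)$-security condition of Definition~\ref{def:securitydef} with $\rho_{E'}:=\rho_{CE^nF}$. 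The one genuinely delicate point is the first step --- extracting the collision-entropy version of privacy amplification from \cite{renner05} and checking that the support-projector reference state is admissible and reproduces $\ren{2}{}{\sigma_{XE}}-\ren{0}{}{\sigma_E}$ exactly; the rank and $\tr(\cdot^2)$ estimates for the error-correction transcript, and the closing triangle inequality, are routine.
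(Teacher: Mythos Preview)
Your proof is correct and follows essentially the same route as the paper: Renner's privacy-amplification bound in the $S_2-S_0$ form (the paper simply quotes it as \cref{eq:PROOFENV}, whereas you recover it by taking the normalised support projector as the reference state in the conditional collision entropy), a triangle-inequality smoothing over the cq-ball, and a separation of the error-correction transcript $C$. The only cosmetic difference is the leak bookkeeping: the paper invokes Lemmas~\ref{lem:KRDATA} and~\ref{lem:KRSUBADD} to obtain the refined $\leak=\log|\mathcal{C}|-\inf_{x^n}S_\infty(P_{C|X^n=x^n})$ of Definition~\ref{defi:ECLEAK}, while your direct rank and $\tr(\cdot^2)$ estimates as written yield $\log|\mathcal{C}|$ rather than $\leak$ --- but the missing $\inf_{x^n}S_\infty$ is exactly the additional gain available in your $S_2$ step (since $\sum_c P(c|x)^2\le\max_c P(c|x)$), so the combined inequality matches.
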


  \emph{Sketch of Proof:} In the following we give an idea of the proof which follows the lines of \cite{renner05, Renner:kl}. For all details see \cref{proof:main:theo}.   We first prove that $\ell$ can be chosen such that
  \begin{align}\label{proof:step1}
   \ell \leq &\sup_{\sigma_{X^nE^nC}\in \ball{\frac{\epsbar}{2}}{\rho_{X^nE^nC}}}{\left(\ren{2}{}{\sigma_{X^nE^nC}}-\ren{0}{}{\sigma_{E^nC}}\right)}+ \nonumber \\&\quad\quad\quad+2\log\left(2\myeps_{PA}\right),
  \end{align}
  where the additional random variable $C$ is associated with the probability distribution of transcripts of the EC protocol. Then we will ''extract'' the leakage term using the data processing inequality and the subadditivity of the R\'enyi entropies.
  \qed
  
The bound in \cref{for:depopt} is related to the bound calculated in \cite{tim06} because it involves optimizations on R\'enyi entropies. However, in \cite{tim06} the two R\'enyi entropies are optimized independently and here we have a combined optimization problem. This additional constraint is mitigated by the fact that in our bound we optimize over a bigger environment than the one used in  \cite{tim06}, more precisely $\myeps'=\frac{\overline{\myeps}^2}{2}$ where $\myeps'$ is the environment used for the smooth R\'enyi entropies in \cite{tim06}.

\subsection{Lower bound of \cref{theorem:PADEP} using smooth R\'enyi entropies}
In this section, we present a lower bound for the key length presented in \cref{theorem:PADEP}. The optimization problem involved in equation \cref{for:depopt}  is exponentially complex  because the dimension of the involved operators scales with $n$, that is the length of the string used for extracting the key. For reducing the complexity of the problem we consider the symmetric six-state protocol. For this protocol the number of different eigenvalues in $\rho_{XE}^{\otimes n}$ scales polynomially with $n$ \cite{tim06}, therefore as done in \cite{tim06}, it is possible to concentrate on optimizing the eigenvalue distribution of $\sigma_{X^n E^n}$. However it is not clear how to find the eigenvalue distribution of $\sigma_{E^n}$ from the one of $\sigma_{X^n E^n}$ in such a way that it is possible to perform computations for big $n$. In the following we present a lower bound on \cref{theorem:PADEP} expressed in terms of the smooth R\'enyi entropy of order zero and a \emph{modified smooth R\'enyi entropy of order two} that we will denote as $\overline{S}_{2}^{\myeps}(\rho_{XE}^{\otimes n})$. This last entropy permits to bound the eigenvalues of $\sigma_{E^n}$ for a given  $\sigma_{X^n E^n}$.  From the numerical point of view the deviation from $S_{2}^{\myeps}(\rho_{XE}^{\otimes n})$  is negligible.

\subsubsection{Modified Smooth R\'enyi entropy of order two}
Let  $\rho_{XE}^{\otimes n}$ be the operator describing  Alice's classical string of $n$ bits correlated with the operator $\rho_{E}^{\otimes n}$ held by Eve. The operator $\rho_{XE}^{\otimes n}$  is constructed by a direct sum of $2^n$ blocks which have the same eigenvalues (see \cref{sec:usedoper} for additional details).

\begin{definition}\label{def:modsmoothrentwo}
The modified smooth R\'enyi entropy of order two of the operator $\rho_{XE}^{\otimes n}$ is defined by
\begin{equation}
\label{eq:def:modsmoothrentwo}
\overline{S}_{2}^{\myeps}(\rho_{XE}^{\otimes n}):=S_{2}(\tau_{X^n E^n}),
\end{equation}
where the operator $\tau_{X^n E^n}$ has the following properties:
\begin{enumerate}
\item  $\tau_{X^n E^n}$ has the following form
\begin{equation}
\tau_{X^n E^n}:=\frac{1}{2^{n}}\sum_{x=0}^{2^{n}-1}\ket{x}\bra{x}\otimes\tau_{E^n}^{x},
\end{equation}
where $\{\ket{x}\}$ is the basis in which $\rho_{XE}^{\otimes n}$ is a classical-quantum state (Definition~\ref{def:cqstates}). Moreover each of the $\{\tau_{E^n}^{x}\}$ has the same eigenvalues and the dependence on $x$ is manifested only in the eigenvectors (See \cref{oper:tau} for a more formal statement).
\item Let $\{\Lambda_{i}\}_{i=0,...,n+1}$ be the set of differing eigenvalues of one block of the operator $\rho_{XE}^{\otimes n}$ in increasing order; i.e. $\Lambda_{i} < \Lambda_{i+1}$ and let $\{m_{i}\}_{i=0,...,n+1}$ be the set of multiplicities such that $m_{i}$ is the multiplicity of $\Lambda_{i}$.  Let $\{\mu_i\}_{i=0,...,n+1}$ be the eigenvalues of one block of  $\tau_{X^n E^n}$ in increasing order with respective multiplicity $\{n_i\}_{i=0,...,n+1}$.
Let
\begin{eqnarray}
	s_r^+&:=&\sum\limits_{i=1}^r m_{n-i+2}(\Lambda_{n-i+2}-
	\Lambda_{n-r+1}),\\
\end{eqnarray}
for $0\le r\le n+1$.

The eigenvalues of $\tau_{X^n E^n}$ are defined  by the following relations 
\begin{equation}
\label{def:eigen:tauxnen}
 \left\{
\begin{array}{l l}
  \mu_{i} := \left\{
  \begin{array}{l l}
    \Lambda_{+} & \quad n+1-b^{+}\leq i\leq n+1\\
    \Lambda_{i} & \quad 1\leq i \leq n-b^{+}\\
	\frac{\myeps}{2m_{0}} & \quad  i=0\\
  \end{array} \right.\\
n_i = m_i  \quad\quad  1\leq i\leq n+1,
\end{array}
 \right.
\end{equation}
where 
\begin{equation}
	b^{+}:=\max\{r:s_r^{+}\le\frac{\myeps}{2}\}
\end{equation}
and
\begin{eqnarray}
\label{eq:s2:lambdaplus}
	\Lambda_{+}:=\Lambda_{n-b^+ + 1}-\frac{\frac{\myeps}{2}-s_{b^+}}{\sum_{i=0}^{b^+}m_{n-i+1}}.
\end{eqnarray}
\end{enumerate}
\end{definition}

Since the smoothing in the smooth R\'enyi entropy of order two is realized by taking the maximum in the environment, it follows for  any operator $\overline{\sigma}_{X^n E^n}\in\mathcal{B}^{\frac{\myeps}{2}}(\rho_{XE}^{\otimes n})$
\begin{equation*}
 S_{2}^{\myeps}(\rho_{XE}^{\otimes n})\geq S_{2}(\overline{\sigma}_{X^n E^n}).
\end{equation*}
Therefore, if we can prove that the operator $\tau_{X^n E^n}$ introduced before is such that $\tau_{X^n E^n}\in\mathcal{B}^{\frac{\myeps}{2}}(\rho_{XE}^{\otimes n})$, then we have proven that the modified smooth R\'enyi entropy is a lower bound for the smooth R\'enyi entropy. 

\begin{proposition}
The operator $\tau_{X^n E^n}$ defined by its eigenvalues in \cref{def:eigen:tauxnen} is such that 
$\frac{1}{2}\|\tau_{X^n E^n} -  \rho_{XE}^{\otimes n}\|=\frac{\myeps}{2}$, i.e., $\tau_{X^n E^n}\in\mathcal{B}^{\frac{\myeps}{2}}(\rho_{XE}^{\otimes n})$.
\end{proposition}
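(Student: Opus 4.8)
\emph{Proof proposal.} The plan is to turn the trace‑norm distance into a one‑dimensional $\ell_1$ computation between eigenvalue lists and then to evaluate that $\ell_1$ sum term by term, using the water‑filling recipe fixed in \cref{def:eigen:tauxnen}.

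First I would exploit the block structure. Both $\rho_{XE}^{\otimes n}$ and $\tau_{X^nE^n}$ are cq‑states with respect to the same orthonormal basis $\{\ket x\}$ of $\mathcal H^{X^n}$, hence block‑diagonal along $\mathcal H^{X^nE^n}=\bigoplus_x\bigl(\mathbb C\ket x\otimes\mathcal H^{E^n}\bigr)$, so
\begin{equation*}
\|\tau_{X^nE^n}-\rho_{XE}^{\otimes n}\|_1=\frac{1}{2^n}\sum_x\|\tau_{E^n}^x-\rho_{E^n}^x\|_1 .
\end{equation*}
By construction (the precise statement is \cref{oper:tau}) the eigenvectors of $\tau_{E^n}^x$ are chosen to coincide with those of $\rho_{E^n}^x$ — the nonzero part of $\rho_{E^n}^x$ is only rescaled/clamped and the kernel is filled in — so $\tau_{E^n}^x$ and $\rho_{E^n}^x$ commute and $\|\tau_{E^n}^x-\rho_{E^n}^x\|_1=\sum_i n_i|\mu_i-\Lambda_i|$, where $\mu_i$ is paired with $\Lambda_i$ through the shared eigenspace. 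All $2^n$ blocks are isospectral and treated identically, and $n_i=m_i$, so the factors $2^n$ and $\frac1{2^n}$ cancel and $\|\tau_{X^nE^n}-\rho_{XE}^{\otimes n}\|_1=\sum_{i=0}^{n+1}m_i|\mu_i-\Lambda_i|$ with $\sum_i m_i\Lambda_i=1$.

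Next I would split this sum according to the three cases of \cref{def:eigen:tauxnen}. The middle range $1\le i\le n-b^+$ contributes nothing since there $\mu_i=\Lambda_i$. For $i=0$ I would use that $\rho_{XE}$, hence $\rho_{XE}^{\otimes n}$, is not of full rank — each conditional state $\rho_E^{x_k}$ has rank strictly below $\dim\mathcal H^E$, cf.\ \cref{sec:usedoper} — so $\Lambda_0=0$ and the $i=0$ term is $m_0\mu_0=m_0\cdot\frac{\eps}{2m_0}=\frac{\eps}{2}$. For the top indices $n+1-b^+\le i\le n+1$, once one checks the ordering $\Lambda_{n-b^+}<\Lambda_+\le\Lambda_{n-b^++1}\le\Lambda_i$ (in particular $\Lambda_+\ge0$, so $\tau_{E^n}^x\ge0$), each such term is $m_i(\Lambda_i-\Lambda_+)$, and I would evaluate the sum directly. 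Writing $M:=\sum_{i=0}^{b^+}m_{n-i+1}=\sum_{j=n-b^++1}^{n+1}m_j$, the identity to establish is
\begin{equation*}
\sum_{j=n-b^++1}^{n+1}m_j(\Lambda_j-\Lambda_+)=s_{b^+}^+ +(\Lambda_{n-b^++1}-\Lambda_+)\,M ,
\end{equation*}
which comes out by collecting terms and recognising $s_{b^+}^+=\sum_{j=n-b^++2}^{n+1}m_j(\Lambda_j-\Lambda_{n-b^++1})$; substituting the defining value $\Lambda_+=\Lambda_{n-b^++1}-\frac{\eps/2-s_{b^+}^+}{M}$ from \cref{eq:s2:lambdaplus} collapses the right‑hand side to $s_{b^+}^+ +\bigl(\frac{\eps}{2}-s_{b^+}^+\bigr)=\frac{\eps}{2}$.

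Adding the two nonzero contributions gives $\|\tau_{X^nE^n}-\rho_{XE}^{\otimes n}\|_1=\frac{\eps}{2}+\frac{\eps}{2}=\eps$, i.e.\ $\frac12\|\tau_{X^nE^n}-\rho_{XE}^{\otimes n}\|_1=\frac{\eps}{2}$; moreover the same bookkeeping shows that the mass $+\frac{\eps}{2}$ inserted at the bottom equals the mass $-\frac{\eps}{2}$ removed from the top, so $\tr\tau_{X^nE^n}=\tr\rho_{XE}^{\otimes n}=1$, hence $\tau_{X^nE^n}\in S(\mathcal H)$, which together with the distance computation yields $\tau_{X^nE^n}\in\mathcal B^{\eps/2}(\rho_{XE}^{\otimes n})$. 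I expect the only real work to be the index bookkeeping in the top sum — lining up the ranges appearing in $s_r^+$, in $M$ and in the clamped set $\{n+1-b^+,\dots,n+1\}$ — together with the elementary but necessary verification of the ordering/positivity inequalities, in particular $\Lambda_{n-b^+}<\Lambda_+$, which uses the maximality of $b^+$ in the condition $s_r^+\le\frac{\eps}{2}$ (so that the partial clamp does not overshoot the next untouched eigenvalue), plus treating the boundary value $b^+=0$ separately.
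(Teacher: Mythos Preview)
Your argument is correct and is precisely the ``direct calculation of the distance using the spectral decomposition of $\rho_{XE}^{\otimes n}$'' that the paper alludes to in its one-line proof; you have simply carried out in full the block decomposition, the reduction to an $\ell_1$ sum over paired eigenvalues, and the water-filling bookkeeping that the paper leaves implicit. Nothing in your approach differs from the paper's intended route.
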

\begin{proof}
The proof follows by the direct calculation of the distance using the spectral decomposition of $\rho_{XE}^{\otimes n}$.
\end{proof}
For the six-state protocol for $n=10^4$,  it turns out that\footnote{The high precision used in this calculation is obtained using an arbitrary precision computer program (See \cref{app:calc}).}  $|\overline{S}_{2}^{\myeps}(\rho_{XE}^{\otimes n})-S_{2}^{\myeps}(\rho_{XE}^{\otimes n})|/S_{2}^{\myeps}(\rho_{XE}^{\otimes n})\propto 10^{-5390}$ for a $QBER=5\%$ and $\myeps=10^{-16}$. Moreover, for increasing $n$ the difference becomes smaller.  The reason of this similarity is that the dimension of the kernel of $\rho_{XE}^{\otimes n}$ is much bigger than the degeneracy of the support, namely  $m_{0}=2^{2n}-2^{n}$ vs $\sum_{i\neq0}m_{i}=2^{n}$, therefore there is, practically,  no difference between the eigenvalue distribution in \cref{def:eigen:tauxnen} and the optimal eigenvalue distribution for $S_{2}^{\myeps}(\rho_{XE}^{\otimes n})$ presented in \cite{tim06}.

\subsubsection{Computable lower bound for the achievable key length\label{sec:depbound}}
The following theorem provides the bound that we are going to exploit in this paper. 

\begin{theorem}
\label{theo:bound:depopt}
Let $\rho_{XE}^{\otimes n}$ be the cq-state describing the classical string shared by Alice and Bob and the correlated quantum state of the  eavesdropper. Then
\begin{align*}
	\underset{\sigma_{X^{n}E^{n}}\in\cqball{\frac{\bar{\myeps}}{2}}{\rho_{XE}^{\otimes n}}}{\mathrm{sup}}&[ S_{2}(\sigma_{X^{n}E^{n}})-S_{0}(\sigma_{E^n})]\geq\\ &\overline{S}_{2}^{\etr}(\rho_{XE}^{\otimes n})-            S^{\erk}_{0}(\rho_{E}^{\otimes n}+\bar{\delta}_{E^{n}}) - \erk
\end{align*} 
with $\bar{\delta}_{E^{n}}=\frac{\erk}{2^{2n+1}}\one_{E^n}$ and $0\leq\erk\leq\overline{\myeps}$.
\end{theorem}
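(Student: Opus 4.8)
\emph{Proof proposal.} The plan is to exhibit a single feasible point of the supremum on the left that already attains the right-hand side, so that the whole argument reduces to an explicit construction together with a careful accounting of trace-distance budgets. Morally I would split the smoothing radius as $\tfrac{\epsbar}{2}=\tfrac{\etr}{2}+\tfrac{\erk}{2}$: the first part is used to flatten the $X^nE^n$-part (this yields the $\overline{S}_2^{\etr}$ term), and the second part is used to shrink the support of the $E^n$-marginal (this yields the $S_0^{\erk}$ term).

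First I would take the block-structured operator $\tau_{X^nE^n}$ of Definition~\ref{def:modsmoothrentwo} built with the smoothing parameter $\etr$ in place of $\eps$, but keep only its \emph{capping} part: cap the largest eigenvalues of each block down to the level $\Lambda_+$ within the budget, \emph{without} lifting the kernel. Call the result $\tau''_{X^nE^n}$. Since capping only lowers eigenvalues, each block of $\tau''_{X^nE^n}$ is dominated by the corresponding block of $\rho_{XE}^{\otimes n}$, so $\tau''_{X^nE^n}\le\rho_{XE}^{\otimes n}$ and $\tau''_{X^nE^n}\in\cqball{\frac{\etr}{4}}{\rho_{XE}^{\otimes n}}$ (only the negative part of $\tau''_{X^nE^n}-\rho_{XE}^{\otimes n}$, of mass $\tfrac{\etr}{2}$, survives). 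Dropping the positive near-kernel eigenvalues that $\tau_{X^nE^n}$ carries only decreases $\tr\!\big((\tau'')^2\big)$, hence $S_2(\tau''_{X^nE^n})\ge S_2(\tau_{X^nE^n})=\overline{S}_2^{\etr}(\rho_{XE}^{\otimes n})$ by Definition~\ref{def:modsmoothrentwo}. Taking the partial trace, $\tau''_{E^n}\le\rho_E^{\otimes n}\le\rho_E^{\otimes n}+\bar\delta_{E^n}$, so already $\mathrm{supp}(\tau''_{E^n})\subseteq\mathrm{supp}(\rho_E^{\otimes n})$ — the crude rank bound that is available for free.

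Next I would spend the remaining budget $\tfrac{\erk}{2}$ to bring the marginal rank down from $\mathrm{rank}(\rho_E^{\otimes n})$ to $2^{\,S_0^{\erk}(\rho_E^{\otimes n}+\bar\delta_{E^n})}$. Let $\Pi$ be the spectral projector of $\rho_E^{\otimes n}+\bar\delta_{E^n}$ onto the eigenvectors kept by an optimal smoothing for $S_0^{\erk}$, so that $\mathrm{rank}\,\Pi=2^{\,S_0^{\erk}(\rho_E^{\otimes n}+\bar\delta_{E^n})}$ and $\tr\!\big((\one-\Pi)(\rho_E^{\otimes n}+\bar\delta_{E^n})\big)\le\erk$; adding $\bar\delta_{E^n}=\tfrac{\erk}{2^{2n+1}}\one_{E^n}$ is exactly what makes $\rho_E^{\otimes n}+\bar\delta_{E^n}$ full rank with an explicit spectral floor $\tfrac{\erk}{2^{2n+1}}$, so that this truncation is unambiguous and discards complete eigenspaces. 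I would then put $\sigma_{X^nE^n}:=(\one_{X^n}\otimes\Pi)\,\tau''_{X^nE^n}\,(\one_{X^n}\otimes\Pi)$ and check: it is still a cq operator of the required block form, so it lies in the cq-ball; $\tr(\sigma_{X^nE^n}^{2})\le\tr\!\big((\tau'')^2\big)$ because compressing by $\one_{X^n}\otimes\Pi$ cannot increase the Hilbert--Schmidt norm, hence $S_2(\sigma_{X^nE^n})\ge\overline{S}_2^{\etr}(\rho_{XE}^{\otimes n})$ with no further loss; and $\sigma_{E^n}=\Pi\,\tau''_{E^n}\,\Pi$ obeys $\mathrm{rank}\,\sigma_{E^n}\le\mathrm{rank}\,\Pi$, i.e., $S_0(\sigma_{E^n})\le S_0^{\erk}(\rho_E^{\otimes n}+\bar\delta_{E^n})$. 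The trace-distance cost of the truncation is governed by $\tr\!\big((\one-\Pi)\tau''_{E^n}\big)\le\tr\!\big((\one-\Pi)(\rho_E^{\otimes n}+\bar\delta_{E^n})\big)\le\erk$ (using $\tau''_{E^n}\le\rho_E^{\otimes n}+\bar\delta_{E^n}$ and the elementary fact that $0\le A\le B$ forces $\mathrm{supp}\,A\subseteq\mathrm{supp}\,B$), and I would feed this through a gentle-measurement-type estimate to conclude $\sigma_{X^nE^n}\in\cqball{\frac{\epsbar}{2}}{\rho_{XE}^{\otimes n}}$. Inserting the two bounds into $S_2(\sigma_{X^nE^n})-S_0(\sigma_{E^n})$ then gives the claimed inequality; the additive $\erk$ and the precise size of $\bar\delta_{E^n}$ are the slack consumed in closing this last estimate.

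The hard part is exactly this last step. The $S_2$-optimal block operator has a \emph{full-rank} $E^n$-marginal (its $S_0$ is the trivial $2n$), so the real content of the theorem is that one can shave the marginal down to the \emph{smoothed} rank of $\rho_E^{\otimes n}$ while simultaneously not disturbing $S_2$, staying inside the $\tfrac{\epsbar}{2}$-ball, and losing only $\erk$ in the bound. Because the eigenbases of the blocks of $\tau''_{X^nE^n}$ need not be compatible with the eigenbasis of $\rho_E^{\otimes n}$, the truncation above is not an exact pinching, so the delicate point is to keep its trace-distance cost linear in $\erk$; this is where the spectral floor $\tfrac{\erk}{2^{2n+1}}$ (guaranteeing that a complete eigenspace is removed) and the split $\epsbar=\etr+\erk$ of the radius do their real work. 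The remaining ingredients — the value of $\tr(\tau^2)$, the rank of $\tau''_{E^n}$, and the distance $\tfrac12\|\tau_{X^nE^n}-\rho_{XE}^{\otimes n}\|_1$ — are routine and are already packaged in Definition~\ref{def:modsmoothrentwo} and the Proposition preceding this theorem.
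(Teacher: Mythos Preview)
Your high-level plan --- pick one feasible $\sigma$ by flattening for $S_2$ with budget $\etr$, then truncating the $E^n$-support with budget $\erk$ --- is exactly the paper's strategy, but two steps do not close. First, your $\tau''$ (capping only, no kernel lift) has trace $1-\tfrac{\etr}{2}$, and your $\sigma=(\one\otimes\Pi)\tau''(\one\otimes\Pi)$ has even less; neither is a state, so neither lies in the cq-ball as defined in the paper. The paper keeps the \emph{full} $\tau$ of Definition~\ref{def:modsmoothrentwo} (kernel lift included, so $\tau$ is normalized), takes $\Pi$ to be a spectral projector of $\tau_{E^n}$ rather than of $\rho_E^{\otimes n}+\bar\delta_{E^n}$, and sets $\eta_{X^nE^n}:=\tfrac{1}{1-\erk/2}(\one\otimes\Pi)\tau(\one\otimes\Pi)$. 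The normalization prefactor is precisely the source of the additive $-\erk$: one has $S_2(\eta)=S_2\bigl((\one\otimes\Pi)\tau(\one\otimes\Pi)\bigr)+2\log(1-\tfrac{\erk}{2})\ge\overline{S}_2^{\etr}(\rho_{XE}^{\otimes n})-\erk$, using the same compression bound you invoke. Your ``slack consumed in closing the last estimate'' is therefore not slack at all but an explicit renormalization cost that you have omitted.

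Second, the role of $\bar\delta_{E^n}$ is not to impose a spectral floor for an ``unambiguous'' truncation. Because the paper keeps the kernel lift, the eigenvalues of $\tau_{E^n}$ can exceed those of $\rho_E^{\otimes n}$; the paper proves $[\tau_{E^n},\rho_E^{\otimes n}]=0$ and, in the common eigenbasis, bounds the overshoot by $\tfrac{\etr}{2m_0}\bigl(1-(\lambda_1/(\lambda_0+\lambda_1))^n\bigr)$. That overshoot is exactly what $\bar\delta_{E^n}$ absorbs, yielding $S_0(\eta_{E^n})=S_0^{\erk}(\tau_{E^n})\le S_0^{\erk}(\rho_E^{\otimes n}+\bar\delta_{E^n})$. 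With your capped-only $\tau''$ you already have $\tau''_{E^n}\le\rho_E^{\otimes n}$, so on your route $\bar\delta_{E^n}$ would never appear --- the fact that you invoke it as a device is a sign that the mechanism behind the statement is different from your reading of it.
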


\begin{proof}
In order to provide a lower bound, it is enough to choose an operator in $\mathcal{B}^{\frac{\bar{\myeps}}{2}}_{\mathrm{cq}}(\rho_{XE}^{\otimes n})$ and to calculate the difference between the R\'enyi entropies of the chosen operator. 
In \cref{proof:add:detail} we construct an operator  $\eta_{X^n E^n}\in\mathcal{B}^{\frac{\bar{\myeps}}{2}}_{\mathrm{cq}}(\rho_{XE}^{\otimes n})$ such that the following two inequalities hold:
\begin{equation}\label{proof:prop:s2eq}
S_{2}(\eta_{X^n E^n})\geq \overline{S}_{2}^{\etr}(\rho_{XE}^{\otimes n}) -  \erk
\end{equation}
and
\begin{equation}\label{proof:prop:s0}
S_{0}(\eta_{E^n}) \leq  S_{0}^{\erk}(\rho_{E}^{\otimes n}+\bar{\delta}_{E^{n}}),
\end{equation} 
where $\bar{\delta}_{E^{n}}=\frac{\erk}{2^{2n+1}}\one_{E^n}$.

Using these two inequalities, we have
\begin{align}
	\underset{\sigma_{X^{n}E^{n}}\in \mathcal{B}^{\frac{\bar{\myeps}}{2}  }_{\textrm{cq}}(\rho_{XE}^{\otimes n})}{\mathrm{sup}}&[ S_{2}(\sigma_{X^{n}E^{n}})-S_{0}(\sigma_{E^n})]\geq \\
&S_{2}(\eta_{X^n E^n})-S_{0}(\eta_{E^n})\geq\\
&\overline{S}_{2}^{\etr}(\rho_{XE}^{\otimes n})-            S^{\erk}_{0}(\rho_{E}^{\otimes n}+\bar{\delta}_{E^{n}}) - \erk.
\end{align} 
\end{proof}

\begin{remark}\label{rem:erk} Numerical calculations indicate that the choice $\erk=\frac{\overline{\myeps}}{2}$  is optimal for a wide range of used parameters.
\end{remark}

\begin{remark} The bound provided in \cref{theo:bound:depopt} may not be asymptotically optimal. However, the emphasis is for finite-key analysis and the bound permits to improve the key rate for experimentally relevant number of signals. Note that although we can have small differences in the asymptotic case, the bound is, from the numerical point of view, pretty tight. In fact, note that (see Definition~ \ref{def:smoothrenent})
\begin{align*}
\underset{\sigma_{X^{n}E^{n}}\in\cqball{\frac{\bar{\myeps}}{2}}{\rho_{XE}^{\otimes n}}}{\mathrm{sup}}&[ S_{2}(\sigma_{X^{n}E^{n}})-S_{0}(\sigma_{E^n})]&\\\leq
 &S_{2}^{\overline{\myeps}}(\rho_{XE}^{\otimes n})-S_{0}^{\overline{\myeps}}(\rho_{E}^{\otimes n}).
\end{align*} 

Calculating the difference between the upper bound and the lower bound, it is  for small $n$ ($n\approx10^4$) of the order of $0.1\%$ and it decreases for larger $n$.
\end{remark}

\section{Results\label{sec:result}}
\label{app:calc}

The security is characterized by the parameter $\myeps$, representing the acceptable probability of failure of the execution of the protocol. In the following we consider a standard setting with $\myeps=10^{-9}$.  For the simulations we assume that $n'_{X}=n'_{Y}$ and we take for parameter estimation $m_{X}=m_{Y}=m_{Z}=n'_{X}$. The length of the string used for the extraction of the key is $n=n'_{Z}-m_{Z}$ which has at most QBER $e=e_{m}+2\zeta(\myeps_{PE}, m_{Z})$ with probability $1-\myeps_{PE}$ (see \cref{eq:pa:boundQBER}). The error correction protocol performs such that in \cref{eq:leakec} we have $f_{EC}=1.2$ and  $\myeps_{EC}=10^{-10}$ (\cite{Scarani:2008ys}, \cref{eq:leakec}). Finally, we optimize the free parameters ($\myeps_{PE}, \epsbar, \myeps_{PA}, \mathrm{Pr}(\sigma_{X})$,$n$) in order to  maximize the key rate. 

The algorithms for the calculations were implemented using \verb!C++!. The library  CNL (Class Library for Numbers) \cite{web:cln} was used to perform calculations with arbitrary precision. Due to the non-smoothness of the involved functions, we used the Hybrid Optimization Parallel Search PACKage HOPSPACK \cite{Hops20-Sandia}, which permits to deal with all involved optimizations in an efficient way and permits to perform the calculations on a cluster.

In the following we summarize the three bounds for the achievable secret key rate that we are going to compare.

\subsubsection*{Bound proposed in this paper}
The following proposition summarizes our results of \cref{Sec:security}.
\begin{proposition}
Let $\rho_{XE}^{\otimes n}$ be the cq-state describing the classical string shared by Alice and Bob which is correlated with the quantum state of the  eavesdropper. Let $N$ be the initial number of quantum states shared by Alice and Bob, $n$ be the length of the string used for extracting the key which has QBER $e=e_{m}+2\zeta(\myeps_{PE}, m_{Z})$ with probability $1-\myeps_{PE}$. Then Alice and Bob can achieve the secret key rate
\begin{align}
\label{eq:r}
        r:=&\frac{n}{N}\left[\overline{S}_{2}^{\frac{\bar{\myeps}}{2}}(\rho_{XE}^{\otimes n})-            S^{\frac{\bar{\myeps}}{2}  }_{0}(\rho_{E}^{\otimes n}+\bar{\delta}_{E^{n}}) - \overline{\myeps}  -\leak \right]_{e=e_{m}+2\zeta}\nonumber\\ &\quad\quad\quad+2\log\left(2\myeps_{PA}\right),
        \end{align}
where $\myeps=\myeps_{PE}+\myeps_{PA}+\overline{\myeps}+\myeps_{EC}$.
\end{proposition}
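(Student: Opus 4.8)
The plan is to assemble the Proposition by chaining together the three ingredients already established — the master key-length bound of \cref{theorem:PADEP}, its computable relaxation \cref{theo:bound:depopt}, and the parameter-estimation guarantee \cref{eq:pa:boundQBER} — and then to pass from key length to key rate via a union bound over the protocol's failure events. First, \cref{theorem:PADEP} states that, taking $\rho_{X^{n}E^{n}}$ to be the cq-state held \emph{after} error correction, any key length $\ell$ bounded above by $\sup_{\sigma_{X^{n}E^{n}}\in\cqball{\frac{\overline{\myeps}}{2}}{\rho_{X^{n}E^{n}}}}\bigl(S_{2}(\sigma_{X^{n}E^{n}})-S_{0}(\sigma_{E^{n}})\bigr)-\leak+2\log(2\myeps_{PA})$ yields an $(\overline{\myeps}+\myeps_{PA})$-secure key in the sense of \cref{def:securitydef}. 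Since this supremum is not directly computable, I replace it by the explicit lower bound of \cref{theo:bound:depopt} with the free smoothing parameter set to $\erk=\frac{\overline{\myeps}}{2}$ (optimal in the relevant range, \cref{rem:erk}); under the collective-attack and depolarizing-channel assumptions the state is exactly $\rho_{XE}^{\otimes n}$, so the bound becomes $\overline{S}_{2}^{\frac{\overline{\myeps}}{2}}(\rho_{XE}^{\otimes n})-S_{0}^{\frac{\overline{\myeps}}{2}}(\rho_{E}^{\otimes n}+\bar{\delta}_{E^{n}})-\overline{\myeps}-\leak+2\log(2\myeps_{PA})$, with $\bar{\delta}_{E^{n}}$ as in \cref{theo:bound:depopt}, and any $\ell$ below this value remains $(\overline{\myeps}+\myeps_{PA})$-secure.

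Next I account for the finite statistics of parameter estimation. The entropic terms depend on the true QBER $e$ of the key, which is not observed directly: \cref{eq:pa:boundQBER} only guarantees $e\le e_{m}+2\zeta(\myeps_{PE},m_{Z})$ except with probability $\myeps_{PE}$. I condition on this favourable event; because the extractable key length is non-increasing in the QBER — which I would establish from the explicit eigenvalue structure of $\rho_{XE}^{\otimes n}$ for the depolarizing channel — one may replace $e$ by its upper estimate $e_{m}+2\zeta$ throughout, which is the meaning of the subscript $[\,\cdot\,]_{e=e_{m}+2\zeta}$ in the statement. On the complementary event, of probability at most $\myeps_{PE}$, I bound the left-hand side of the security inequality \cref{def:security} trivially by $1$; likewise the error-correction step fails with probability at most $\myeps_{EC}$ — its $\log(2/\myeps_{EC})$ cost already being absorbed into $\leak$ through \cref{eq:leakec} — and that event too contributes trivially. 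A union bound over these failure modes, together with the $\overline{\myeps}+\myeps_{PA}$ from the previous paragraph, gives the overall security parameter $\myeps=\myeps_{PE}+\myeps_{PA}+\overline{\myeps}+\myeps_{EC}$.

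Finally, the secret key rate is $r=\ell/N$ with $N$ the initial number of quantum states; inserting the largest admissible $\ell$, evaluating the QBER-dependent terms at $e=e_{m}+2\zeta$, and collecting the remaining terms produces the displayed formula for $r$. I expect the main obstacle to be the finite-statistics step: one must (i) make precise the monotonicity of $\overline{S}_{2}^{\frac{\overline{\myeps}}{2}}(\rho_{XE}^{\otimes n})$ and of $S_{0}^{\frac{\overline{\myeps}}{2}}(\rho_{E}^{\otimes n}+\bar{\delta}_{E^{n}})$ in the QBER, so that evaluating at the confidence bound $e_{m}+2\zeta$ is conservative rather than optimistic, and (ii) verify that conditioning on the parameter-estimation and error-correction events interacts correctly with the composable security definition, so that the several failure probabilities genuinely add. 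The remaining steps are bookkeeping built on \cref{theorem:PADEP} and \cref{theo:bound:depopt}.
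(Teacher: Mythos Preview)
Your proposal is correct and follows exactly the paper's own approach: the paper's proof is the single sentence ``Using \cref{theorem:PADEP}, \cref{theo:bound:depopt} and Remark~\ref{rem:erk} the result follows,'' and you have simply unpacked those three ingredients and made explicit the union-bound accounting for $\myeps_{PE}$ and $\myeps_{EC}$ that the paper leaves implicit. The additional concerns you flag (monotonicity of the entropic quantities in the QBER, and the compatibility of conditioning with the composable security definition) are not addressed in the paper either, so your level of rigor already exceeds that of the original.
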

\begin{proof}
Using \cref{theorem:PADEP}, \cref{theo:bound:depopt} and Remark~\ref{rem:erk} the result follows.
\end{proof}

\subsubsection*{Asymptotic Equipartition Property (AEP) bound}
The conditional smooth min-entropy \cite{Renner:kl} characterizes the optimal secret key rate \cite{Renner:kl, Tomamichel2010}. The AEP bound  used in \cite{Scarani:2008ys,bratzik2010min}  comes from the AEP approximation \cite{Renner:kl, Tomamichel:2009cr} of the conditional smooth min-entropy.
Collective attacks allow us to bound the smooth min-entropy of a product state by the conditional von Neumann entropy of a single state \cite{bratzik2010min,  Scarani:2008ve}
  \begin{align}
\label{eq:raep}
  r_{\mathrm{AEP}}:=&\frac{n}{N}\left[ H(X|E)_{\rho}-5\sqrt{\frac{\log(2/\epsbar)}{n}}-\leak \right]_{e=e_{m}+2\zeta}\nonumber\\ &\quad\quad\quad+2\log\left(2\myeps_{PA}\right),
  \end{align} with $H(X|E)_{\rho}=(1-e)\left[1-h\left(\frac{1-\frac{3}{2}e}{1-e}\right)\right]$.
  
\subsubsection*{Smooth R\'enyi entropy bound}This bound was derived in \cite{renner05} and calculated in \cite{tim06} and is given by
\begin{align}
\label{eq:rsre}
        r_{\mathrm{SRE}}:=&\frac{n}{N}\left[\ren{2}{\eps{}'}{\rho_{XE}^{\otimes n}}-\ren{0}{\eps{}'}{\rho_{E}^{\otimes n}}-\leak \right]_{e=e_{m}+2\zeta}\nonumber\\ &\quad\quad\quad+2\log\left(2\myeps_{PA}\right),
        \end{align}
        where $\myeps'=\frac{\overline{\myeps}^2}{2}$.

\subsection{Robustness of the protocol}
An important figure of merit is the \emph{threshold QBER} which characterizes the minimal N for a fixed QBER permitting to extract a positive key rate. As shown in Fig.~\ref{fig:threshold}, with the bound presented in this paper it is possible to have a positive key rate with 23\% signals less than the smooth R\'enyi entropy bound  and 50\% signals less than the AEP approach, for a QBER of 1\%.

\begin{figure}[h]
\includegraphics[width=8.6cm]{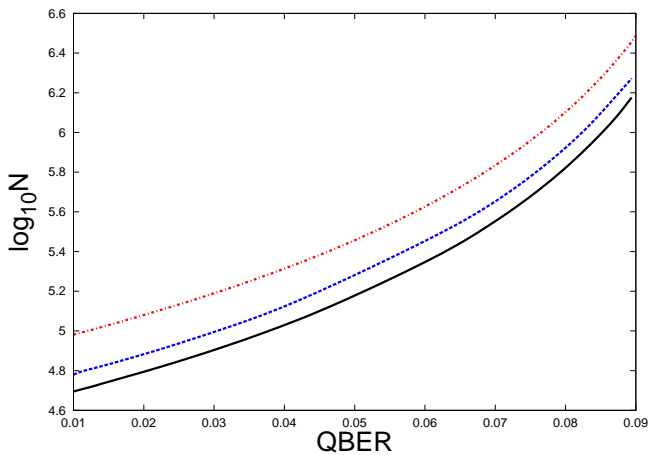}
\caption{\label{fig:threshold}(Color online) Minimal number of signals versus QBER permitting to extract a non-zero key rate. Comparison between the bound presented in this article $r$ (black solid line), see \cref{eq:r}, the smooth R\'enyi entropy bound $r_{SRE}$ (blue dashed line), see \cref{eq:rsre} and the AEP bound $r_{AEP}$ (red dot-dashed  line), see \cref{eq:raep}.}
\end{figure}

\subsection{Secret key rates}
In  \cref{fig:kr1}, we compare the secret key rates calculated by the three approaches for various QBERs. The bound developed in this paper leads to significant higher key rates  when limited resources are used. In particular when $QBER=1\%$ with the bound presented in this paper with $N\approx5\cdot10^4$, it is possible to have non-zero key rates. Instead with the other approaches it is necessary to use $N\approx6.5\cdot10^4$ for the smooth R\'enyi entropy bound and $N\approx10^5$ for the AEP bound.

\begin{figure}[h]
\includegraphics[width=8.6cm]{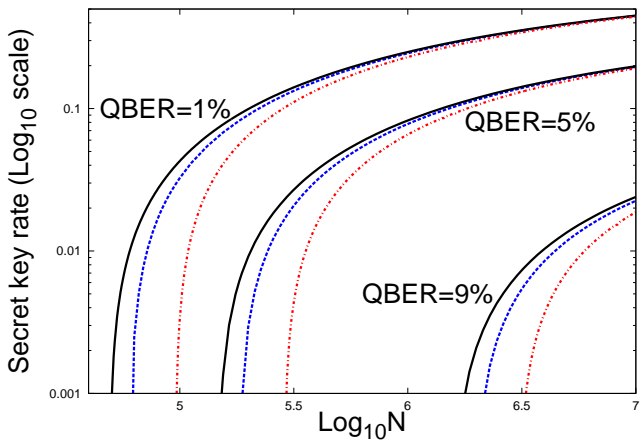}
\caption{\label{fig:kr1}(Color online) Key rate versus $\log_{10}(N)$ where N is the initial number of quantum systems shared by Alice and Bob.  Comparison between the bound presented in this article $r$ (black solid line), see \cref{eq:r}, the smooth R\'enyi entropy bound $r_{SRE}$ (blue dashed line), see \cref{eq:rsre} and the AEP bound $r_{AEP}$ (red dot-dashed  line), see \cref{eq:raep}. }
\end{figure}

\section{Conclusions\label{sec:conclusion}}
Although optimal bounds for the finite-key scenario are provided in the literature they are not calculable and were so far only estimated by  bounds coming from the asymptotic equipartition theorem (see \cite{bratzik2010min, TomLim2011} for two exceptions). In this paper we resumed the smooth R\'enyi entropy bound \cite{tim06} and we proved that this bound is tighter than the AEP bound. Our main contribution is a new bound on the maximal achievable secret key length which involves optimizations on R\'enyi entropies. With respect to \cite{tim06} the main advantage is that we use a bigger environment for the optimizations and with respect  to \cite{Scarani:2008ve, Scarani:2008ys} we don't use bounds coming from corrections to the asymptotic case.  As a result we were able to obtain  higher secret key rates with respect to \cite{tim06,Scarani:2008ve, Scarani:2008ys}. For calculating the quantities involved in our analysis we need  the quantum channel to be symmetric. Although we don't have any guarantee that Alice and Bob share such a channel, it is possible for them to reduce to this case employing additional symmetries\footnote{Actually, in this case it also possible to redefine the protocol removing the sifting following the construction presented in \cite{kraus:2005kx}. The key rate will be higher but the relative differences between the three approaches remain the same.} or taking as QBER of a symmetric channel the worst one of a non-symmetric channel.

Finally, regarding future work, note that here we considered an ideal protocol where the signals entering in Alice and Bob's laboratory are qubits and where the measurement devices are perfect. All this assumptions could be relaxed following the analysis done in \cite{Scarani:2009wr, scarani:2009}.

\begin{acknowledgments}
We would like to thank Sylvia Bratzik, Matthias Kleinmann, and in particular Renato Renner for valuable and enlightening discussions. AS thanks also Alberto Carlini for his interest, advice and support during the early stage of this work. We acknowledge partial financial support by Deutsche Forschungsgemeinschaft (DFG) and by BMBF (project QuOReP).
\end{acknowledgments}

\begin{appendix}

\section{Properties of R\'{e}nyi entropies}\label{app:prop:renyi}
The following properties and their proofs can be found in \cite{Renner:2005pi} and \cite{RenWol04a}.
 \begin{lemma}[Data processing] \label{lem:KRDATA}
 Let $\eps,\eps'\geq 0$ and $\rho_{XBC}\in S\left(\mathcal{H}^X \otimes \mathcal{H}^B \otimes \mathcal{H}^C \right)$ be a classical-quantum state, i.e. $\rho_{XBC}=\sum_{x\in\mathcal{X}}P_X(x)\pr{x}\otimes \rho_B^x\otimes \rho_C^x$. Then with $\ren{2}{}{\rho_{XC}|X}:=\inf_{x \in \mathcal{X}}\ren{2}{}{\rho_{XC}|x}$, the following inequality holds
  \begin{equation}
  \ren{2}{\eps{}+\eps{}'}{\rho_{XBC}}\geq \ren{2}{\eps{}'}{\rho_{XC}}+\ren{2}{\eps{}}{\rho_{XB}|X}.
  \end{equation}
 \end{lemma}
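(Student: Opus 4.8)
The plan is to prove the smooth inequality by exhibiting a single near-optimal smoothing state for the joint system that is manufactured out of the optimizers of the two right-hand-side terms. First I would dispatch the non-smooth core ($\eps=\eps'=0$), which already reveals the mechanism. Because $\rho_{XBC}$ is classical on $X$ and a \emph{product} $\rho_B^x\otimes\rho_C^x$ given $x$, the blocks $\pr{x}$ are orthogonal and the purity factorizes,
$$\tr(\rho_{XBC}^2)=\sum_x P_X(x)^2\,\tr\big((\rho_B^x)^2\big)\,\tr\big((\rho_C^x)^2\big)\le\Big(\sup_y\tr\big((\rho_B^y)^2\big)\Big)\sum_x P_X(x)^2\,\tr\big((\rho_C^x)^2\big).$$
The last sum is exactly $\tr(\rho_{XC}^2)$, while $\sup_y\tr\big((\rho_B^y)^2\big)=2^{-\inf_y S_2(\rho_B^y)}=2^{-\ren{2}{}{\rho_{XB}|X}}$. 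Taking $-\log$ gives the non-smooth statement $\ren{2}{}{\rho_{XBC}}\ge\ren{2}{}{\rho_{XC}}+\ren{2}{}{\rho_{XB}|X}$.

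Second, for the smooth version I would pass to optimizers. Recalling $\ren{2}{\eps}{\rho_{XB}|X}=\inf_x\ren{2}{\eps}{\rho_B^x}$, where each conditional state is smoothed separately inside $\ball{\eps/2}{\rho_B^x}$, for every $x$ I pick $\sigma_B^x\in\ball{\eps/2}{\rho_B^x}$ (up to an arbitrarily small slack, since the suprema need not be attained) with $S_2(\sigma_B^x)\ge\ren{2}{\eps}{\rho_{XB}|X}$, and I pick $\sigma_{XC}\in\ball{\eps'/2}{\rho_{XC}}$ with $S_2(\sigma_{XC})=\ren{2}{\eps'}{\rho_{XC}}$. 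Since $\rho_{XC}$ is itself cq, I may assume $\sigma_{XC}$ is cq as well: pinching in the $\{\ket{x}\}$ basis is trace-distance contracting (so it stays in $\ball{\eps'/2}{\rho_{XC}}$) and cannot increase the purity (so it cannot decrease $S_2$). Writing the pinched optimizer as $\sigma_{XC}=\sum_x \hat q_x\,\pr{x}\otimes\tilde\sigma_C^x$ with $\tilde\sigma_C^x$ normalized, I define the joint candidate
$$\sigma_{XBC}:=\sum_x \hat q_x\,\pr{x}\otimes\sigma_B^x\otimes\tilde\sigma_C^x.$$

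Third comes the distance bookkeeping, which is the step to handle with care. Inserting the intermediate state $\sum_x\hat q_x\,\pr{x}\otimes\rho_B^x\otimes\tilde\sigma_C^x$ and using the triangle inequality together with the block-orthogonality identity $\|\sum_x\pr{x}\otimes A_x\|_1=\sum_x\|A_x\|_1$, the distance splits into two pieces. The piece toward the intermediate state equals $\sum_x\hat q_x\,\dist{\sigma_B^x-\rho_B^x}\le\frac{\eps}{2}$, and here the \emph{uniform} per-symbol smoothing budget $\dist{\sigma_B^x-\rho_B^x}\le\frac{\eps}{2}$ is exactly what lets the perturbed weights $\hat q_x$ drop out, since $\sum_x\hat q_x=1$. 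The piece from the intermediate state to $\rho_{XBC}$ reduces, after factoring the common normalized operator $\rho_B^x$ out of each block, to $\dist{\sigma_{XC}-\rho_{XC}}\le\frac{\eps'}{2}$. Hence $\dist{\sigma_{XBC}-\rho_{XBC}}\le\frac{\eps+\eps'}{2}$, i.e. $\sigma_{XBC}\in\ball{\frac{\eps+\eps'}{2}}{\rho_{XBC}}$.

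Finally I would evaluate the candidate's purity exactly as in the non-smooth core,
$$\tr(\sigma_{XBC}^2)=\sum_x\hat q_x^2\,\tr\big((\sigma_B^x)^2\big)\,\tr\big((\tilde\sigma_C^x)^2\big)\le 2^{-\ren{2}{\eps}{\rho_{XB}|X}}\,\tr(\sigma_{XC}^2)=2^{-\ren{2}{\eps}{\rho_{XB}|X}-\ren{2}{\eps'}{\rho_{XC}}},$$
using $\tr\big((\sigma_B^x)^2\big)\le 2^{-\ren{2}{\eps}{\rho_{XB}|X}}$ for every $x$ and $\tr(\sigma_{XC}^2)=2^{-\ren{2}{\eps'}{\rho_{XC}}}$. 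Taking $-\log$ yields $S_2(\sigma_{XBC})\ge\ren{2}{\eps'}{\rho_{XC}}+\ren{2}{\eps}{\rho_{XB}|X}$, and since $\sigma_{XBC}$ lies in the $\frac{\eps+\eps'}{2}$-ball the left-hand side is at most $\ren{2}{\eps+\eps'}{\rho_{XBC}}$, giving the claim. The main obstacle is precisely the weight mismatch in the third step: smoothing the $C$-system perturbs $P_X(x)\mapsto\hat q_x$, and the argument only closes because the conditional smoothing is applied per symbol with a common budget, so these weights cancel; a definition smoothing $\rho_{XB}$ jointly with a weighted budget would instead require an extra reweighting estimate.
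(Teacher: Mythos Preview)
The paper does not actually prove this lemma: Appendix~A merely states it and defers the proof to \cite{Renner:2005pi} and \cite{RenWol04a}. Your argument is a correct, self-contained proof along the lines one finds in those references. The non-smooth core is exactly the purity factorization across the orthogonal $X$-blocks, and the smooth upgrade via the explicit candidate $\sigma_{XBC}$, together with the two-step triangle-inequality bookkeeping (intermediate state $\sum_x\hat q_x\pr{x}\otimes\rho_B^x\otimes\tilde\sigma_C^x$), is carried out cleanly; in particular your observation that the per-symbol smoothing budget on $B$ is what makes the perturbed weights $\hat q_x$ drop out is the right point.

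Two cosmetic remarks. First, after pinching you only know $S_2(\sigma_{XC})\ge\ren{2}{\eps'}{\rho_{XC}}$, so in the last display the equality $\tr(\sigma_{XC}^2)=2^{-\ren{2}{\eps'}{\rho_{XC}}}$ should be an inequality $\le$; this goes in the direction you need. Second, since the paper works in finite dimensions and the $\eps/2$-balls inside $S(\mathcal{H})$ are compact while $S_2$ is continuous, the suprema are attained and the ``arbitrarily small slack'' caveat is unnecessary.
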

 \begin{lemma}[Subadditivity]\label{lem:KRSUBADD}
 Let $\eps{}\geq0,\eps{}'\geq 0$ and $\rhoAB\in S\left(\mathcal{H}^A \otimes \mathcal{H}^B \right)$, then
  \begin{equation}
  \ren{0}{\eps{}+\eps{}'}{\rhoAB}\geq \ren{0}{\eps{}}{\rho_{A}}+\ren{0}{\eps{}'}{\rho_{B}}.
  \end{equation}
 \end{lemma}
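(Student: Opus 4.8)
The plan is to use the minimal-rank description of the smooth R\'enyi-0 entropy. By Definition~\ref{def:smoothrenent}, and since $\tr(\sigma^{0})=\mathrm{rank}(\sigma)$, for any $\delta\ge 0$ we have $\ren{0}{\delta}{\rho}=\min_{\sigma\in\ball{\delta/2}{\rho}}\log(\mathrm{rank}(\sigma))$, so each smooth $S_{0}$ is the logarithm of the smallest rank attainable within trace distance $\delta/2$ of its argument. I would fix an optimiser $\sigma_{AB}\in\ball{(\eps+\eps')/2}{\rho_{AB}}$ for the left-hand side, i.e.\ one with $\log(\mathrm{rank}(\sigma_{AB}))=\ren{0}{\eps+\eps'}{\rho_{AB}}$, and pass to its marginals $\sigma_{A}=\tr_{B}\sigma_{AB}$ and $\sigma_{B}=\tr_{A}\sigma_{AB}$. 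Because the partial trace cannot increase the trace distance, $\dist{\sigma_{A}-\rho_{A}}\le(\eps+\eps')/2$ and $\dist{\sigma_{B}-\rho_{B}}\le(\eps+\eps')/2$, so both marginals are admissible smoothings of $\rho_{A}$ and $\rho_{B}$.

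The second ingredient is the correct bookkeeping of the smoothing budget. I would argue that the single combined parameter $\eps+\eps'$ can be split so that (rank-nonincreasing modifications of) $\sigma_{A}$ and $\sigma_{B}$ lie in $\ball{\eps/2}{\rho_{A}}$ and $\ball{\eps'/2}{\rho_{B}}$ respectively; by the minimality defining the marginal smooth entropies this yields $\log(\mathrm{rank}(\sigma_{A}))\ge\ren{0}{\eps}{\rho_{A}}$ and $\log(\mathrm{rank}(\sigma_{B}))\ge\ren{0}{\eps'}{\rho_{B}}$. Combining these with a rank comparison between $\sigma_{AB}$ and its marginals and taking logarithms would then deliver the claimed bound $\ren{0}{\eps+\eps'}{\rho_{AB}}\ge\ren{0}{\eps}{\rho_{A}}+\ren{0}{\eps'}{\rho_{B}}$. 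An alternative route is to pass to a purification of $\rho_{AB}$ and transfer the statement to a superadditivity estimate for the complementary system, using the entropic duality between the max-entropy $S_{0}$ and the min-entropy $S_{\infty}$, which sometimes produces the $\ge$ orientation more naturally than working with $S_{0}$ directly.

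The hard part will be precisely this rank comparison, relating $\mathrm{rank}(\sigma_{AB})$ to the marginal ranks in the orientation demanded by the stated inequality. Support containment, $\mathrm{supp}(\sigma_{AB})\subseteq\mathrm{supp}(\sigma_{A})\otimes\mathrm{supp}(\sigma_{B})$, only controls $\mathrm{rank}(\sigma_{AB})$ from above, so the real work is to exhibit the extra structure of the optimal bipartite smoothing --- for the operators of interest here this is the classical-quantum block structure of $\rho_{AB}$ recorded in \cref{def:cqstates} --- that pins the rank of $\sigma_{AB}$ from below, together with a careful distribution of the smoothing parameter over the two subsystems so that the marginal balls of radius $\eps/2$ and $\eps'/2$ are genuinely reached. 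I expect this rank lower bound, rather than any of the trace-distance estimates, to be the crux of the argument.
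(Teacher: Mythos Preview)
The paper does not prove this lemma at all; the appendix simply records it together with Lemma~\ref{lem:KRDATA} and defers both proofs to the cited references \cite{Renner:2005pi,RenWol04a}. So there is no in-paper argument to compare your attempt against.

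More importantly, the inequality sign in the displayed statement is a typo: the correct subadditivity of $S_{0}$ reads
\[
\ren{0}{\eps+\eps'}{\rho_{AB}}\;\le\;\ren{0}{\eps}{\rho_{A}}+\ren{0}{\eps'}{\rho_{B}},
\]
and this is the direction actually proved in \cite{Renner:2005pi,RenWol04a} and actually used in the proof of \cref{theorem:PADEP}: there one needs an \emph{upper} bound on $\ren{0}{}{\sigma_{E^{n}C}}$ in order to lower-bound $\ren{2}{}{\sigma_{X^{n}E^{n}C}}-\ren{0}{}{\sigma_{E^{n}C}}$, and the paper combines $\ren{0}{}{\sigma_{E^{n}C}}\le\ren{0}{}{\sigma_{E^{n}}}+\ren{0}{}{\sigma_{C}}$ with $\ren{0}{}{\sigma_{C}}\le\log|\mathcal{C}|$ to extract the leakage term. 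A one-line counterexample to the $\ge$ version is the classically correlated state $\rho_{AB}=\tfrac{1}{2}\bigl(\pr{00}+\pr{11}\bigr)$ with $\eps=\eps'=0$, for which $\ren{0}{}{\rho_{AB}}=1$ while $\ren{0}{}{\rho_{A}}+\ren{0}{}{\rho_{B}}=2$.

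Your instinct in the third paragraph was therefore exactly right: the support inclusion $\mathrm{supp}(\sigma_{AB})\subseteq\mathrm{supp}(\sigma_{A})\otimes\mathrm{supp}(\sigma_{B})$ only yields $\mathrm{rank}(\sigma_{AB})\le\mathrm{rank}(\sigma_{A})\cdot\mathrm{rank}(\sigma_{B})$, and no amount of cq structure or duality will reverse that --- because the reversed inequality is false. For the correct direction the natural proof goes the other way: start from optimal smoothings $\sigma_{A}^{\ast}\in\ball{\eps/2}{\rho_{A}}$ and $\sigma_{B}^{\ast}\in\ball{\eps'/2}{\rho_{B}}$, project $\rho_{AB}$ onto $\mathrm{supp}(\sigma_{A}^{\ast})\otimes\mathrm{supp}(\sigma_{B}^{\ast})$ (and renormalise), check that the result lies in $\ball{(\eps+\eps')/2}{\rho_{AB}}$, and read off the rank bound. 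That is essentially the argument in the cited references.
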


\section{Proof of \cref{theorem:PADEP}} \label{proof:main:theo}
Before we start with the proof, we define some quantities used in the following.  

 \begin{definition}(Two-universal hash functions\cite{Carter1979143})\label{defi:hash}
  Let $\mathcal{F}$ be a family of functions from $\mathcal{X}$ to $\mathcal{Z}$ and let $P_{\mathrm{F}}$ be a probability distribution on $\mathcal{F}$. The pair $\left(\mathcal{F},P_{\mathrm{F}}\right)$ is called \cal{two-universal} if $\mathrm{P}_f\left[f(x)=f(x')\right]\leq\frac{1}{|\mathcal{Z}|}$ for any distinct $x,x'\in \mathcal{X}$ and $f$ chosen at random from $\mathcal{F}$ according to the distribution $P_{\mathrm{F}}$.
 \end{definition}

The following definition involves  the leakage of information during the error correction protocol.
\begin{definition}\label{defi:ECLEAK}
The number of bits leaked to an eavesdropper during the error correction protocol is \cite{Renner:kl}
  \begin{equation}
   \leak:=\log{|\mathcal{C}|}-\inf_{x^n \in \mathcal{X}}\ren{\infty}{}{P_{C|X^n=x^n}},
   \end{equation}
  where $|\mathcal{C}|$ is the cardinality of the set $\mathcal{C}$ containing all possible communication transcripts and $P_{C|X^n=x^n}$ is the probability that there is a specific communication transcript when Alice has a specific $x^n$. 
  \end{definition}
Note that in the definition Bob is missing because we consider a one-way error correction protocol.

Moreover, let us recall a result proven in \cite{renner05} and used in the following proof.

\begin{theorem}\cite{renner05} Let $\rho_{X^nE^nC}$ be a cq-state describing Alice's bitstring $X^n$, Eve's quantum system and the distribution of error correction transcripts $C$. Let $\mathcal{F}$ be a two-universal family of hash function from $\mathcal{X}^{n}\rightarrow\{0,1\}^{\ell}$. Then 
 \begin{align}\label{eq:PROOFENV}
 \frac{1}{2}\|\rho_{F(X^n)^\ell E^lCF}-&\rho_U\otimes\rho_{E^\ell CF}\|_{1}\nonumber\\
  &\leq \frac{1}{2}2^{-\frac{1}{2}\left(\ren{2}{}{\rho_{X^nE^nC}}-\ren{0}{}{\rho_{E^nC}}-\ell\right)}
 \end{align} where $\rho_{F(X^n)^\ell E^\ell CF}:=\sum_{f\in \mathcal{F}}P_{\mathrm{F}}(f) \rho_{f(X^n)^\ell E^\ell C}\otimes \pr{f}$ and $\rho_U=\frac{1}{2^\ell}\one$.
\end{theorem}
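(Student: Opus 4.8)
\emph{Proof plan.} The plan is to run the standard ``leftover hashing'' argument: reduce the trace distance to a Hilbert--Schmidt norm, and control that norm by a short computation that uses only the two-universality property.

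First I would peel off the classical register $F$. Because the left-hand side of \cref{eq:PROOFENV} is block-diagonal in $F$, it equals $\tfrac12\sum_{f\in\mathcal F}P_{\mathrm F}(f)\bigl\|\rho_{f(X^n)E^nC}-\rho_U\otimes\rho_{E^nC}\bigr\|_1$, i.e. the expectation over a random hash function $f$ of the trace distance for a \emph{fixed} $f$; it therefore suffices to bound this expectation. Writing the input state as $\rho_{X^nE^nC}=\sum_x P(x)\pr{x}\otimes\rho^x_{E^nC}$, the post-hashing state is $\rho_{f(X^n)E^nC}=\sum_z \pr{z}\otimes\rho^z_{E^nC}$ with $\rho^z_{E^nC}=\sum_{x:f(x)=z}P(x)\rho^x_{E^nC}$, and since $\sum_z\rho^z_{E^nC}=\rho_{E^nC}$ we have $0\le\rho^z_{E^nC}\le\rho_{E^nC}$. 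Hence $X_f:=\rho_{f(X^n)E^nC}-\rho_U\otimes\rho_{E^nC}$ is Hermitian and supported on a space of dimension $2^{\ell}\cdot 2^{\ren{0}{}{\rho_{E^nC}}}$, so Cauchy--Schwarz gives $\|X_f\|_1\le\sqrt{2^{\ell}\,2^{\ren{0}{}{\rho_{E^nC}}}}\,\|X_f\|_2$, and with Jensen's inequality $\mathbb E_f\|X_f\|_2\le(\mathbb E_f\|X_f\|_2^2)^{1/2}$ the problem reduces to bounding $\mathbb E_f\|X_f\|_2^2$.

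Next I would expand $\|X_f\|_2^2$ into three traces. The term $\tr\bigl((\rho_U\otimes\rho_{E^nC})^2\bigr)$ and the cross term $\tr\bigl(\rho_{f(X^n)E^nC}(\rho_U\otimes\rho_{E^nC})\bigr)$ both equal $2^{-\ell}\tr(\rho_{E^nC}^2)$ independently of $f$ (using $\sum_z\rho^z_{E^nC}=\rho_{E^nC}$), and combine to $-2^{-\ell}\tr(\rho_{E^nC}^2)$. For the last term I would write $\tr\bigl(\rho_{f(X^n)E^nC}^2\bigr)=\sum_{x,x':f(x)=f(x')}P(x)P(x')\tr\bigl(\rho^x_{E^nC}\rho^{x'}_{E^nC}\bigr)$ and split off the diagonal $x=x'$, which contributes exactly $\tr(\rho_{X^nE^nC}^2)=2^{-\ren{2}{}{\rho_{X^nE^nC}}}$ and is $f$-independent. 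For $x\ne x'$ two-universality (Definition~\ref{defi:hash}) gives $\mathbb E_f\bigl[\mathbf 1[f(x)=f(x')]\bigr]\le 2^{-\ell}$; since every trace $\tr(\rho^x_{E^nC}\rho^{x'}_{E^nC})$ is nonnegative, dropping the constraint $x\ne x'$ bounds the off-diagonal contribution by $2^{-\ell}\tr\bigl(\bigl(\sum_x P(x)\rho^x_{E^nC}\bigr)^2\bigr)=2^{-\ell}\tr(\rho_{E^nC}^2)$, which cancels the negative contribution. Thus $\mathbb E_f\|X_f\|_2^2\le 2^{-\ren{2}{}{\rho_{X^nE^nC}}}$.

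Chaining the bounds,
\[
\tfrac12\,\mathbb E_f\|X_f\|_1\le\tfrac12\sqrt{2^{\ell}\,2^{\ren{0}{}{\rho_{E^nC}}}}\;2^{-\ren{2}{}{\rho_{X^nE^nC}}/2}=\tfrac12\,2^{-\frac12\left(\ren{2}{}{\rho_{X^nE^nC}}-\ren{0}{}{\rho_{E^nC}}-\ell\right)},
\]
which is \cref{eq:PROOFENV}. I expect the main obstacle to be the bookkeeping in the Hilbert--Schmidt computation — keeping the sub-normalized conditional states $\rho^z_{E^nC}$ straight and verifying that the off-diagonal term cancels the negative contribution exactly — rather than anything conceptually deep; a secondary point needing care is the dimension count used in the Cauchy--Schwarz step, which rests on $\rho^z_{E^nC}\le\rho_{E^nC}$.
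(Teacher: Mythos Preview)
Your argument is correct and is essentially the standard proof of the quantum leftover hash lemma as given in \cite{renner05}: the reduction from trace norm to Hilbert--Schmidt norm via the rank bound $2^{\ell}\cdot 2^{\ren{0}{}{\rho_{E^nC}}}$, followed by the collision-probability computation that exploits two-universality, is exactly Renner's route. Note that the present paper does not supply its own proof of this statement---it is quoted from \cite{renner05} and used as an input to the proof of \cref{theorem:PADEP}---so there is nothing further to compare against here.
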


Now we are ready to prove \cref{theorem:PADEP}.
 \begin{proof}(\Cref{theorem:PADEP}) 
  At the end of the QKD protocol, the classical string obtained from privacy amplification correlated with Eve's information is 
 \begin{equation*}
  \rho_{F(X^n)^\ell E^\ell CF}:=\sum_{f\in \mathcal{F}}P_{\mathrm{F}}(f) \rho_{f(X^n)^\ell E^\ell C}\otimes \pr{f}.
 \end{equation*}
   Let $\rho'_{F(X^n)^\ell E^\ell CF} \in \ball{\frac{\epsbar}{2}}{\rho_{F(X^n)^\ell E^\ell CF}}$ be the operator that maximizes the right-hand side of  \cref{eq:PROOFENV}. Because the trace-distance does not increase applying the partial trace, it follows that $\rho'_{E^\ell CF}\in \ball{\frac{\epsbar}{2}}{\rho_{E^\ell CF}}$. Let us define  $\rho_U=\frac{1}{2^\ell}\one$. Then, 
  \begin{eqnarray}
  &&  \dist{\rho_{F(X^n)^\ell E^\ell CF}-\rho_U\otimes\rho_{E^\ell CF}}\nonumber \\
  &=& \frac{1}{2} \|\rho_{F(X^n)^\ell E^\ell CF}-\rho'_{F(X^n)^\ell E^\ell CF}\nonumber\\ &\quad&\quad\quad\quad+\rho'_{F(X^n)^\ell E^\ell CF}-\rho_U\otimes \rho'_{E^\ell CF}\nonumber\\ &\quad&\quad\quad\quad-\rho_U\otimes\rho_{E^\ell CF}+\rho_U\otimes\rho'_{E^\ell CF}\|_{1} \label{proof:depbound:2} \\
  &\leq& 2\frac{\epsbar}{2}+\dist{\rho'_{F(X^n)^\ell E^\ell CF}-\rho_U\otimes \rho'_{E^\ell CF}} \label{proof:depbound:3} \\
  &\leq& \epsbar+\frac{1}{2}2^{-\frac{1}{2}\left(\sup_{\sigma_{X^nE^nC}\in   \ball{\frac{\overline{\eps}}{2}}{\rho_{X^nE^nC}}}\left[\ren{2}{}{\sigma_{X^nE^nC}}-\ren{0}{}{\sigma_{E^nC}}\right]-\ell\right) }\nonumber.
  \end{eqnarray} 
  In the step from Eq.~\eqref{proof:depbound:2} to Eq.~\eqref{proof:depbound:3} we used the triangle inequality and the fact that the maximal possible distance is $\frac{\overline{\myeps}}{2}$. The last inequality follows from \cref{eq:PROOFENV} and the definition of ${\rho'_{F(X^n)^\ell E^\ell CF}}$.
Requiring that the distilled key is  $\left(\epsbar+\myeps_{PA}\right)$-secure, i.e.
  \begin{align*}
  \epsbar+\frac{1}{2}&2^{-\frac{1}{2}\left(\sup_{\sigma_{X^nE^nC}\in \ball{\frac{\overline{\eps}{}}{2}}{\rho_{X^nE^nC}}}\left[\ren{2}{}{\sigma_{X^nE^nC}}-\ren{0}{}{\sigma_{E^nC}}\right]-\ell\right) }\\ &\stackrel{\textbf{!}}{\leq} \epsbar +\myeps_{PA} ,
  \end{align*}
 the proof of \cref{proof:step1} is completed. 
  
  Regarding the leakage term, note that in order to apply Lemma~\ref{lem:KRDATA} of \cref{app:prop:renyi} for bounding $\ren{2}{}{\sigma_{X^nE^nC}}$  we restrict the optimization  space to $ \cqball{\frac{\overline{\eps}}{2}}{\rho_{X^nE^nC}}$. Therefore, 
 \begin{align*}
  & \sup_{\sigma_{X^nE^nC}\in \ball{\frac{\overline{\eps}}{2}}{\rho_{X^nE^nC}}}{\left(\ren{2}{}{\sigma_{X^nE^nC}}-\ren{0}{}{\sigma_{E^nC}}\right)}\geq \\
  &\sup_{\sigma_{X^nE^nC}\in \cqball{\frac{\overline{\eps}}{2}}{\rho_{X^nE^nC}}}{\left(\ren{2}{}{\sigma_{X^nE^nC}}-\ren{0}{}{\sigma_{E^nC}}\right)}.
  \end{align*}
  Using Lemma~\ref{lem:KRDATA} of \cref{app:prop:renyi}  with $\myeps=\myeps'=0$, it follows that
  \begin{align}
  \ren{2}{}{\sigma_{X^nE^nC}} \geq  \ren{2}{}{\sigma_{X^nE^n}} + S_{2}(\sigma_{X^nC}|X^{n}).
  \end{align}
  By definition
  \begin{align}
    S_{2}(\sigma_{X^nC}|X^{n})&:=\inf_{x^n \in \mathcal{X}}\ren{2}{}{P_{C|X^n=x^n}}\\
    &\geq \inf_{x^n \in \mathcal{X}}\ren{\infty}{}{P_{C|X^n=x^n}}.
   \end{align}
  Moreover, using Lemma~\ref{lem:KRSUBADD} with $\myeps=\myeps'=0$ we obtain
  \begin{equation}
  \ren{0}{}{\sigma_{E^nC}}\geq \ren{0}{}{\sigma_{E^n}} + \ren{0}{}{\sigma_{C}}.
  \end{equation}
  Putting together the last four equations and Definition~\ref{defi:ECLEAK} the proof is concluded.
\end{proof}

\section{The operator $\rho_{XE}$}
\label{sec:usedoper}
\label{sec:rhoxe}
The Bell-diagonal state shared by Alice and Bob after the use of the depolarizing map is
\begin{align*}
\rhoAB&=\lambda_{1}\ket{\psi^{+}}\bra{\psi^{+}}+\lambda_{2}\ket{\psi^{-}}\bra{\psi^{-}}\\
&+\lambda_{3}\ket{\phi^{+}}\bra{\phi^{+}}+\lambda_{4}\ket{\phi^{-}}\bra{\phi^{-}},
\end{align*}
where the states $\{\ket{\psi^{\pm}}, \ket{\phi^{\pm}}\}$ are the Bell states and $\sum_{i}\lambda_{i}=1$. 

For the symmetric six-state protocol
\begin{equation}
\label{def:lambdas}
\lambda_{0}=\frac{1}{2}(2-3e), \quad \lambda_1=\lambda_2=\lambda_3=\frac{e}{2},
\end{equation}
where  $e$ is the Quantum Bit Error Rate (QBER).

The operator $\rho_{ABE}$  is defined as the purification of $\rhoAB$. Tracing out Bob and measuring Alice's system, we get the operator $\rho_{XE}^{\otimes n}$ describing the classical string $X^n$ held by Alice and Bob and Eve's quantum systems $\rho^{\otimes n}_E$. In general\begin{equation}
\label{def:rhoxe}
\rho_{XE}^{\otimes n}:=\left(\rho_{E}^{0}\oplus\rho_{E}^{1}\right)^{\otimes n},
\end{equation}
with
\begin{align}
  \rho_E^0&:=\left(
    \begin{array}{cccc}
      \lambda _0 &  \sqrt{\lambda _0 \lambda _1} & 0 & 0  \\
      \sqrt{\lambda _0 \lambda _1} & \lambda _1 & 0 & 0  \\
      0 & 0 & \lambda _2 &  \sqrt{\lambda _2 \lambda _3} \\
      0 & 0 & \sqrt{\lambda _2 \lambda _3} & \lambda _3 
    \end{array}
  \right),\\
   \rho_E^1&:=\left(
    \begin{array}{cccc}
      \lambda _0 &  -\sqrt{\lambda _0 \lambda _1} & 0 & 0  \\
      -\sqrt{\lambda _0 \lambda _1} & \lambda _1 & 0 & 0  \\
      0 & 0 & \lambda _2 &  -\sqrt{\lambda _2 \lambda _3} \\
      0 & 0 & -\sqrt{\lambda _2 \lambda _3} & \lambda _3 
    \end{array}
  \right).
\end{align}

Diagonalizing the operators above, we find that they have the same eigenvalues but different eigenvectors, i.e.
\begin{equation}
\rho_{E}^{x}:=\sum_{i=0}^{3}\Gamma^{(1)}_{i}P_{i}^{x}\label{eq:rhoedirac2},
\end{equation}
where the eigenvalues $\{\Gamma_{i}^{(1)}\}$ are
\begin{subequations}
\label{eigenrxeall}
\begin{align}
 \Gamma^{(1)}_{0}&=\Gamma^{(1)}_{2}=0  \label{eigenrxeker},\\
 \Gamma^{(1)}_{1}&=\lambda _0+\lambda _1 \label{eigenrxesup1},\\
 \Gamma^{(1)}_{3}&=\lambda _2+\lambda _3, \label{eigenrxesup2}
\end{align}  
\end{subequations}
and the operators $\{P_{i}^{x}\}$ are projectors on the eigenspace of the eigenvalues $\{\Gamma_{i}^{(1)}\}$ obtained by diagonalizing $\rho_{E}^{x}$. From the diagonalization it is possible to derive explicitly the projectors: \footnote{Let us define $\mathbb{O}_{2}$ as the 2x2 matrix with zero entries.}

\begin{subequations}\label{proj:p}
\begin{align*}
\scriptstyle P_{0}^{0}& \scriptstyle = \scriptstyle\left(\begin{array}{cc}
 \frac{\lambda _1}{\lambda _0+\lambda _1} & -\frac{\sqrt{\lambda
   _0 \lambda _1}}{\lambda _0+\lambda _1} \\
 -\frac{\sqrt{\lambda _0 \lambda _1}}{\lambda _0+\lambda _1} &
   \frac{\lambda _0}{\lambda _0+\lambda _1}
\end{array}
\right)\oplus \mathbb{O}_{2}, \scriptstyle P_{1}^{0}=\left(
\begin{array}{cc}
 \frac{\lambda _0}{\lambda _0+\lambda _1} & \frac{\sqrt{\lambda _0
   \lambda _1}}{\lambda _0+\lambda _1} \\ 
 \frac{\sqrt{\lambda _0 \lambda _1}}{\lambda _0+\lambda _1} &
   \frac{\lambda _1}{\lambda _0+\lambda _1}
\end{array}
\right)\oplus \mathbb{O}_{2}  \\ \scriptstyle P_{2}^{0}& \scriptstyle = \scriptstyle\mathbb{O}_{2}\oplus \left(
\begin{array}{cc}
 \frac{\lambda _3}{\lambda _2+\lambda _3} & -\frac{\sqrt{\lambda
   _2 \lambda _3}}{\lambda _2+\lambda _3} \\
 -\frac{\sqrt{\lambda _2 \lambda _3}}{\lambda _2+\lambda _3} &
   \frac{\lambda _2}{\lambda _2+\lambda _3}
\end{array}
\right), \scriptstyle P_{3}^{0}=\mathbb{O}_{2}\oplus\left(
\begin{array}{cc}
 \frac{\lambda _2}{\lambda _2+\lambda _3} & \frac{\sqrt{\lambda _2
   \lambda _3}}{\lambda _2+\lambda _3} \\
 \frac{\sqrt{\lambda _2 \lambda _3}}{\lambda _2+\lambda _3} &
   \frac{\lambda _3}{\lambda _2+\lambda _3}
\end{array}
\right)\\
\scriptstyle P_{0}^{1}& \scriptstyle = \left(
\begin{array}{cc}
 \frac{\lambda _1}{\lambda _0+\lambda _1} & \frac{\sqrt{\lambda _0
   \lambda _1}}{\lambda _0+\lambda _1} \\
 \frac{\sqrt{\lambda _0 \lambda _1}}{\lambda _0+\lambda _1} &
   \frac{\lambda _0}{\lambda _0+\lambda _1}
\end{array}
\right) \scriptstyle\oplus \mathbb{O}_{2}, \scriptstyle P_{1}^{1}=\left(
\begin{array}{cc}
 \frac{\lambda _0}{\lambda _0+\lambda _1} & -\frac{\sqrt{\lambda
   _0 \lambda _1}}{\lambda _0+\lambda _1} \\
 -\frac{\sqrt{\lambda _0 \lambda _1}}{\lambda _0+\lambda _1} &
   \frac{\lambda _1}{\lambda _0+\lambda _1}
\end{array}
\right)\oplus \scriptstyle \mathbb{O}_{2}\\ \scriptstyle P_{2}^{1}& \scriptstyle = \scriptstyle \mathbb{O}_{2}\oplus\left(
\begin{array}{cc}
 \frac{\lambda _3}{\lambda _2+\lambda _3} & \frac{\sqrt{\lambda _2
   \lambda _3}}{\lambda _2+\lambda _3} \\
 \frac{\sqrt{\lambda _2 \lambda _3}}{\lambda _2+\lambda _3} &
   \frac{\lambda _2}{\lambda _2+\lambda _3}
\end{array}
\right), \scriptstyle P_{3}^{1}= \scriptstyle \mathbb{O}_{2}\oplus\left(
\begin{array}{cc}
 \frac{\lambda _2}{\lambda _2+\lambda _3} & -\frac{\sqrt{\lambda
   _2 \lambda _3}}{\lambda _2+\lambda _3} \\
 -\frac{\sqrt{\lambda _2 \lambda _3}}{\lambda _2+\lambda _3} &
   \frac{\lambda _3}{\lambda _2+\lambda _3}
\end{array}\right).
\end{align*}
\end{subequations}

For the following, it is convenient to define 
\begin{equation}
  \label{proj:sum}
  P_{i}:=\frac{1}{2}\sum_{x=0}^{1}P_{i}^{x},
 \end{equation}
 with $i=0,1,2,3$. As can be easily verified the operators $\{P_{i}\}$ are diagonal in the basis where the operators $\{P^x_i\}$ assume the form given above.
 
\section{Additional details of the proof of \cref{theo:bound:depopt}}
\label{proof:add:detail}
Before to start, it is necessary to fix the notation for the involved operators.
The operator $\rho_{XE}^{\otimes n}$ can be written as 
\begin{equation}
\label{eq:rhoxeformal}
\rho_{XE}^{\otimes n}=\frac{1}{2^{n}}\sum_{x=0}^{2^{n}-1}\ket{x}\bra{x}\otimes\sum_{i=0}^{2^{2n}-1}\Gamma_{i}^{(n)}P_{i}^{(n)x},
\end{equation}
where 
\begin{align}
 \label{proj:nix} P_{i}^{(n)x}&:=\bigotimes_{p=0}^{n-1}P_{i_{p}}^{x_{p}},  \\
\label{prod:eigenv} \Gamma_{i}^{(n)}&:=\prod_{p=0}^{n-1}\Gamma_{i_{p}}^{(1)}
\end{align}
and $i:=\sum_{p=0}^{n-1}4^{p}i_{p}$ with $i_{p}=0,...,3$, $x:=\sum_{p=0}^{n-1}2^{p}x_{p}$ where $x_{p}$ is a binary digit.

The operator $\rho_{XE}^{\otimes n}$ is constituted of $2^n$ diagonal blocks labeled by the index $x$. Each of the blocks has $2^{2n}$ eigenvalues $\Gamma^{(n)}_{i}$ and each eigenvalue is associated to a projector $P_{i}^{(n)x}$ that depends on the eigenvalue (index "i") and on the block (index "x").

\subsection{Construction of the operator $\eta_{X^nE^n}$}
The operator $\eta_{X^nE^n}$ is constructed in such a way that the inequalities in \cref{proof:prop:s2eq} and \cref{proof:prop:s0} are satisfied. For constructing $\eta_{X^nE^n}$ we construct first $\eta_{E^n}$  using the following two steps:
\begin{enumerate}
\item Find $\tau_{X^n E^n}$ such that $S_{2}(\tau_{X^n E^n})=\overline{S}_{2}^{\etr}(\rho_{XE}^{\otimes n})$
\item Find $\eta_{X^n E^n}$ such that $S_0(\eta_{E^n})=S_{0}^{\erk}(\tau_{E^n})$.
\end{enumerate}
By definition of smooth R\'enyi entropy of order two $\tau_{X^n E^n}\in \cqball{\frac{\etr}{2}}{\rho_{XE}^{\otimes n}}$  and it can be written as

\begin{equation}
\label{oper:tau}
\tau_{X^n E^n}:=\frac{1}{2^{n}}\sum_{x=0}^{2^{n}-1}\ket{x}\bra{x}\otimes\sum_{i=0}^{2^{2n}-1}\tau_{i}^{(n)}P_{i}^{(n)x}.
\end{equation}

 Regarding the operator $\eta_{E^n}$, the constraint on its R\'enyi entropy of order zero is only a constraint on its eigenvalues.  For assigning a well defined structure of operator to $\eta_{X^nE^n}$  we use $\tau_{X^nE^n}$. Let $\soproj$ be the projector that cuts out the eigenvalues of $\tau_{E^n}$ such that their sum is $\erk/2$. 

The operator $\eta_{X^nE^n}$, is defined by
\begin{equation}
\label{def:eta}
 \eta_{X^n E^n}:=\frac{1}{1-\frac{\erk}{2}}(\one_{X^n}\otimes \soproj)\tau_{X^{n}E^{n}}(\one_{X^n}\otimes \soproj ).
 \end{equation} 

This definition is such that $\eta_{E^n}=\frac{1}{1-\frac{\erk}{2}}\soproj\tau_{E^n}\soproj$ has the eigenvalues for respecting $S_0(\eta_{E^n})=S_{0}^{\erk}(\tau_{E^n})$.

Finally, note, that the construction above, although arbitrary, is legitimate because, as it is easy to verify, $\eta_{X^n E^n}\in \cqball{\frac{\overline{\myeps}}{2}}{\rho_{XE}^{\otimes n}}$ as required by the statement of \cref{theo:bound:depopt}.

\subsection{Proof of $S_{0}(\eta_{E^n}) \leq  S_{0}^{\erk}(\rho_{E}^{\otimes n}+\bar{\delta}_{E^{n}})$}
\label{sec:upps0}

In order to find the claimed bound, we need to find a bound on the eigenvalues of the operator $\tau_{E^n}$. In order to do that, we exploit the definition of $\tau_{X^n E^n}$, $\rho_{XE}^{\otimes n}$ and of modified smooth R\'enyi entropy of order two (Defintion~\ref{def:modsmoothrentwo}).

We introduce the operator $\delta_{X^{n}E^{n}}$ defined by
\begin{equation}\label{eq:def:delta}
\delta_{X^{n}E^{n}}:=\tau_{X^n E^n}-\rho_{XE}^{\otimes n}.
\end{equation}

Let $\delta_{E^n}:=\tr_{E^n}(\delta_{X^nE^n})$, $\tau_{E^n}:=\tr_{E^n}(\tau_{X^nE^n})$ and let $\{\ket{l}\}$ be a basis of eigenvectors of the operator $\tau_{E^n}$. The eigenvalues  of $\tau_{E^n}$ are
\begin{align}
\bra{l}\tau_{E^n}\ket{l}&:=\bra{l}\rho_{E}^{\otimes n}\ket{l}+\bra{l}\delta_{E^n}\ket{l}.
\end{align}

The operator $\rho_{E}^{\otimes n}$ is fully characterized by the protocol \cite{tim06}. In order to complete the proof, it remains to bound $\bra{l}\delta_{E^n}\ket{l}$.

The following lemma permits to reduce the analysis to the eigenvalues of $\delta_{E^n}$.

\begin{lemma}
\label{theo:comm}
\label{prop:commut}
Let $\rho_{XE}^{\otimes n}, \tau_{X^n E^n}$ be the operators described by equation \eqref{eq:rhoxeformal} and \eqref{oper:tau}. Then
 \begin{equation*}
  [\tau_{E^{n}},\rho_{E}^{\otimes n}]=0.
 \end{equation*}
\end{lemma}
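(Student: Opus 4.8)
The goal is to show that the partial trace over the $E^n$ system... wait, that's a typo in the paper—$\tau_{E^n}$ denotes $\tr_{E^n}$ applied to the wrong subsystem; it should be $\tr_{X^n}$. In any case, the plan is to work in the eigenbasis that simultaneously block-diagonalizes both operators. First I would recall from \cref{sec:usedoper} that the single-copy operators $\rho_E^0$ and $\rho_E^1$ are both diagonalized by the same four-dimensional basis that diagonalizes each $P_i^x$, and that $P_i := \tfrac12(P_i^0 + P_i^1)$ is diagonal in exactly that basis. Tensoring, the operators $P_i^{(n)} := \tfrac{1}{2^n}\sum_x P_i^{(n)x}$ (with $P_i^{(n)x} = \bigotimes_p P_{i_p}^{x_p}$) are then diagonal in the product basis $\{\ket{l}\}$ of $\mathcal H^{E^n}$, since $\tfrac{1}{2^n}\sum_x \bigotimes_p P_{i_p}^{x_p} = \bigotimes_p \bigl(\tfrac12\sum_{x_p} P_{i_p}^{x_p}\bigr) = \bigotimes_p P_{i_p}$.

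\textbf{Key steps.} (1) Trace out the classical register in \eqref{oper:tau}: since $\ket{x}\bra{x}$ has unit trace, $\tr_{X^n}\tau_{X^nE^n} = \sum_i \tau_i^{(n)}\,\bigl(\tfrac{1}{2^n}\sum_x P_i^{(n)x}\bigr) = \sum_i \tau_i^{(n)} P_i^{(n)}$. (2) Do the same for $\rho_{XE}^{\otimes n}$ from \eqref{eq:rhoxeformal}, obtaining $\rho_E^{\otimes n} = \sum_i \Gamma_i^{(n)} P_i^{(n)}$. (3) Observe that the $\{P_i^{(n)}\}$ are mutually orthogonal projectors (as products of orthogonal projectors $P_{i_p}$ on each factor) that are all simultaneously diagonal in the basis $\{\ket{l}\}$. (4) Conclude that both $\tr_{X^n}\tau_{X^nE^n}$ and $\rho_E^{\otimes n}$ are diagonal in $\{\ket{l}\}$, hence commute. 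I would present this as: $[\tau_{E^n},\rho_E^{\otimes n}] = \sum_{i,j}\tau_i^{(n)}\Gamma_j^{(n)}[P_i^{(n)},P_j^{(n)}] = 0$ because each commutator of simultaneously-diagonal operators vanishes.

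\textbf{Main obstacle.} The only real content is verifying step (3)—that $P_{i_p}$ and $P_{j_p}$ are simultaneously diagonal for all $i_p, j_p$ on a single factor, which one reads off directly from the explicit matrices in \eqref{proj:p}: $P_0^0, P_0^1, P_1^0, P_1^1$ all live in the top-left $2\times2$ block and their averages $P_0, P_1$ are diagonal there (eigenvalues $\lambda_1/(\lambda_0+\lambda_1)$, $\lambda_0/(\lambda_0+\lambda_1)$), while $P_2, P_3$ similarly sit diagonally in the bottom-right block. So on each factor $\{P_0,P_1,P_2,P_3\}$ are pairwise commuting diagonal projectors, and the tensor-product structure propagates this to $\{P_i^{(n)}\}$. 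Everything else is bookkeeping: the crucial point that $\tau_{X^nE^n}$ and $\rho_{XE}^{\otimes n}$ share the \emph{same} eigenprojectors $P_i^{(n)x}$ (only the eigenvalues $\tau_i^{(n)}$ versus $\Gamma_i^{(n)}$ differ) is built into \cref{def:modsmoothrentwo} and \eqref{oper:tau}, so after tracing out $X^n$ the two reduced operators are polynomials in the same commuting family and the lemma follows.
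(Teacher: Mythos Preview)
Your proof is correct and follows essentially the same route as the paper: trace out $X^n$, factor the sum over $x$ through the tensor product to obtain $\tfrac{1}{2^n}\sum_x P_i^{(n)x}=\bigotimes_p P_{i_p}$, and then use that each $P_{i_p}$ is diagonal in the computational basis so all the $P_i^{(n)}$ (and hence $\tau_{E^n}$ and $\rho_E^{\otimes n}$) are simultaneously diagonal. One minor slip to fix in the write-up: the averaged operators $P_i=\tfrac12(P_i^0+P_i^1)$ are diagonal but are \emph{not} projectors and are \emph{not} mutually orthogonal (e.g.\ $P_0$ and $P_1$ both have full support on the top $2\times2$ block); your argument only needs simultaneous diagonality, which you do verify, so the conclusion is unaffected.
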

\begin{proof} By definition 
\begin{align*}
 \tau_{E^{n}}&=\frac{1}{2^{n}}\sum_{x=0}^{2^{n}-1}\sum_{i=0}^{2^{2n}-1}\tau_{i}^{(n)}P_{i}^{(n)x}\nonumber
 &=\sum_{i=0}^{2^{2n}-1}\tau_{i}^{(n)}\left(\frac{1}{2^{n}}\sum_{x=0}^{2^{n}-1}P_{i}^{(n)x}\right).
\end{align*}
Observe that the operator in the brackets is diagonal, in fact:
\begin{align*}
\label{proj:decom}
 \frac{1}{2^{n}}\sum_{x=0}^{2^{n}-1}P_{i}^{(n)x}&=\frac{1}{2^{n}}\sum_{x=0}^{2^{n}-1}\bigotimes_{p=0}^{n-1}P_{i_{p}}^{x_{p}}\\
 &=\frac{1}{2^{n}}\sum_{x_{0}=0}^{1}\sum_{x_{1}=0}^{1}...\sum_{x_{n-1}=0}^{1}\bigotimes_{p=0}^{n-1}P_{i_{p}}^{x_{p}}\\
 &=\bigotimes_{p=0}^{n-1}\left(\frac{1}{2}\sum_{x=0}^{1}P_{i_{p}}^{x}\right)\\
 &=\bigotimes_{p=0}^{n-1}P_{i_{p}}.
\end{align*}
Due to the diagonality of the operators  $\{P_{i_{p}}\}_{i_{p}=0,...,3}$ and the fact that the tensor product of diagonal operators lead to a diagonal operator, the assertion is proved. 
\end{proof}

The next lemma, permits to relate the operator $\delta_{E}$ to the operators $P_{i}$ defined in \cref{proj:sum}.

\begin{lemma} It holds
\begin{equation*}
 \delta_{E^{n}}=\frac{\etr}{2m_{0}}\left(\one-(P_{1}+P_{3})^{\otimes n}\right)+\sum_{i\in \mathcal{V}}\delta_{i}P_{i}^{(n)},
\end{equation*}
where the operators $P_{i}$ are defined in \cref{proj:sum}, $P_{i}^{(n)}=\bigotimes_{p=0}^{n-1}P_{i_{p}}$  and $\mathcal{V}:=\{i : \Gamma_{i}^{(n)}\neq 0\}$.
\end{lemma}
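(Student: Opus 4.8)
The plan is to obtain $\delta_{E^{n}}$ directly from the expansions of $\tau_{E^n}$ and $\rho_E^{\otimes n}$, which are built from the \emph{same} family of operators. By \cref{eq:def:delta} and linearity of the partial trace, the reduction of $\delta_{X^nE^n}$ onto $E^n$ satisfies $\delta_{E^n}=\tau_{E^n}-\rho_E^{\otimes n}$. Now $\tau_{X^nE^n}$ in \cref{oper:tau} and $\rho_{XE}^{\otimes n}$ in \cref{eq:rhoxeformal} share the block operators $P_i^{(n)x}$, and applying the identity $\frac{1}{2^n}\sum_{x=0}^{2^n-1}P_i^{(n)x}=\bigotimes_{p=0}^{n-1}P_{i_p}=:P_i^{(n)}$ established inside the proof of \cref{prop:commut}, tracing out $X^n$ gives $\tau_{E^n}=\sum_{i=0}^{2^{2n}-1}\tau_i^{(n)}P_i^{(n)}$ and, by the same computation applied to \cref{eq:rhoxeformal}, $\rho_E^{\otimes n}=\sum_{i=0}^{2^{2n}-1}\Gamma_i^{(n)}P_i^{(n)}$, hence $\delta_{E^n}=\sum_{i=0}^{2^{2n}-1}\bigl(\tau_i^{(n)}-\Gamma_i^{(n)}\bigr)P_i^{(n)}$.

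Next I would split the index set according to whether $\Gamma_i^{(n)}$ vanishes. By \cref{prod:eigenv} together with \cref{eigenrxeall} (where $\Gamma_0^{(1)}=\Gamma_2^{(1)}=0$ while $\Gamma_1^{(1)},\Gamma_3^{(1)}\neq0$), one has $\Gamma_i^{(n)}=\prod_p\Gamma_{i_p}^{(1)}\neq0$ exactly when every $i_p\in\{1,3\}$, i.e. for $i\in\mathcal{V}$, and there are precisely $2^{2n}-2^n=m_0$ indices $i\notin\mathcal{V}$, matching the multiplicity of the zero eigenvalue $\Lambda_0$ of one block of $\rho_{XE}^{\otimes n}$. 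For $i\in\mathcal{V}$ I simply set $\delta_i:=\tau_i^{(n)}-\Gamma_i^{(n)}$, producing the term $\sum_{i\in\mathcal{V}}\delta_iP_i^{(n)}$; the reshuffling of the \emph{non-zero} eigenvalues $\Lambda_1,\dots,\Lambda_{n+1}$ encoded by $b^{+}$ and $\Lambda_{+}$ in \cref{def:eigen:tauxnen} is harmlessly absorbed into these coefficients. For $i\notin\mathcal{V}$ we have $\Gamma_i^{(n)}=0$, and by the eigenvalue prescription \cref{def:eigen:tauxnen} of \cref{def:modsmoothrentwo} with $\myeps$ set to $\etr$ (the order-two smoothing used in the construction of $\tau_{X^nE^n}$), each such $\tau_i^{(n)}$ equals $\mu_0=\frac{\etr}{2m_0}$, so these indices contribute $\frac{\etr}{2m_0}\sum_{i\notin\mathcal{V}}P_i^{(n)}$.

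Finally I would close the loop with two completeness identities. Since $\sum_{i_p=0}^{3}P_{i_p}=\frac12\sum_{x}\sum_{i_p}P_{i_p}^x=\one$ (each $\{P_{i_p}^x\}_{i_p}$ being a complete orthogonal family for fixed $x$), it follows that $\sum_{i=0}^{2^{2n}-1}P_i^{(n)}=\bigotimes_p\bigl(\sum_{i_p}P_{i_p}\bigr)=\one$, whereas $\sum_{i\in\mathcal{V}}P_i^{(n)}=\bigotimes_p(P_1+P_3)=(P_1+P_3)^{\otimes n}$; hence $\sum_{i\notin\mathcal{V}}P_i^{(n)}=\one-(P_1+P_3)^{\otimes n}$, and substituting into the two contributions above yields the asserted identity. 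I do not anticipate a genuine obstacle: the lemma is essentially bookkeeping, and the only points demanding care are the consistent use of the multi-index $i\leftrightarrow(i_0,\dots,i_{n-1})$ across $\tau$ and $\rho_{XE}^{\otimes n}$, the fact that it is exactly the $m_0$ \emph{kernel} indices of a block (not the support indices) that are set to $\frac{\etr}{2m_0}$ in $\tau$, and the observation that the $P_i^{(n)}$ — although diagonal and summing to $\one$ — need not be mutually orthogonal projectors after the average over $x$, so that the $\delta_i$ are expansion coefficients rather than literal eigenvalue differences, which does not affect the statement.
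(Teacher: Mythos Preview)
Your proposal is correct and follows essentially the same route as the paper: both split the index set into $\mathcal{V}$ and its complement, use that the kernel indices carry the constant weight $\frac{\etr}{2m_0}$ from \cref{def:eigen:tauxnen}, and identify $\sum_{i\in\mathcal{V}}P_i^{(n)}=(P_1+P_3)^{\otimes n}$ via the factorization over the digits $i_p\in\{1,3\}$. The only cosmetic difference is the order of operations---you average over $x$ first (invoking the identity from \cref{prop:commut}) and then split the $i$-sum, whereas the paper keeps the $x$-sum, rewrites $\sum_{i\in\mathcal{V}}P_i^{(n)x}=\bigotimes_p(P_1^{x_p}+P_3^{x_p})$, and only then performs the block average.
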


\begin{proof}
Using the eigenvalues in \cref{def:eigen:tauxnen} and  \cref{eq:def:delta}, \cref{oper:tau}
\begin{align*}
	\delta_{E^{n}}&:=\tr_{X^n}(\delta_{X^n E^n})\\
&=\frac{\etr}{2m_{0}}\frac{1}{2^{n}}\sum_{x=0}^{2^{n}-1}\sum_{i\in\mathcal{V}^{\bot}}P_{i}^{(n)x} \\ &\quad\quad+\frac{1}{2^{n}}\sum_{x=0}^{2^{n}-1}\sum_{i\in\mathcal{V}}\delta_{i}P_{i}^{(n)x}.
\end{align*}

The quantity $\sum_{i\in\mathcal{V}^{\bot}}P_{i}^{(n)x}$ is such that
\begin{equation*}
 \sum_{i\in\mathcal{V}^{\bot}}P_{i}^{(n)x}=\one-\sum_{i\in\mathcal{V}}P_{i}^{(n)x}.
\end{equation*}
From equation \cref{prod:eigenv} and \cref{eigenrxeall}, we see that the non-zero eigenvalues of $
\rho_{XE}^{\otimes n}$  are characterized by the absence of the index $i_{p}=0,2$. Therefore, it follows that 
\begin{align*}
 \sum_{i\in \mathcal{V}}P_{i}^{(n)x}&\overset{\eqref{proj:nix}}{=}\sum_{i\in  \mathcal{V}}\bigotimes_{p=0}^{n-1} P_{i_{p}}^{x_{p}}\\
 &=\bigotimes_{p=0}^{n-1} \left(P_{1}^{x_{p}}+P_{3}^{x_{p}}\right).
\end{align*}
By taking  the sum over all blocks and using the operator defined in \cref{proj:sum} the statement of the lemma follows.
\end{proof}

Using the previous lemma we can prove the most important result of this section.
\begin{proposition}
\label{prop:max}
 For $\lambda_{0}>\frac{1}{2}$ and $\lambda_{1}=\lambda_{2}=\lambda_{3}$ the following inequality holds:
 \begin{equation}
  \bra{l}\delta_{E^{n}}\ket{l}\leq\frac{\etr}{2m_{0}}\left(1-\left(\frac{\lambda_{1}}{\lambda_{1}+\lambda_{0}}\right)^{n}\right),
 \end{equation}
where $\{\ket{l}\}$ is a basis of eigenvectors for the operator $\rho_{E}^{\otimes n}$.
\end{proposition}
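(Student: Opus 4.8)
The plan is to start from the explicit form of $\delta_{E^{n}}$ supplied by the lemma just proved,
\begin{equation*}
\delta_{E^{n}}=\frac{\etr}{2m_{0}}\left(\one-(P_{1}+P_{3})^{\otimes n}\right)+\sum_{i\in\mathcal{V}}\delta_{i}P_{i}^{(n)},\qquad \delta_{i}:=\tau_{i}^{(n)}-\Gamma_{i}^{(n)},
\end{equation*}
and to show that when one upper bounds $\bra{l}\delta_{E^{n}}\ket{l}$ only the first term survives. First I would check that $\delta_{i}\le0$ for every $i\in\mathcal{V}$. By Definition~\ref{def:modsmoothrentwo} the eigenvalue $\tau_{i}^{(n)}$ sitting on the support of $\rho_{XE}^{\otimes n}$ is obtained from the corresponding original eigenvalue $\Lambda_{j}$ ($j\ge1$) either unchanged ($1\le j\le n-b^{+}$) or lowered to $\Lambda_{+}$; and $\Lambda_{+}\le\Lambda_{n+1-b^{+}}\le\Lambda_{j}$ because $\tfrac{\myeps}{2}-s_{b^{+}}^{+}\ge0$ by the definition of $b^{+}$. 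The only eigenvalue the smoothing \emph{raises} is the kernel value $\Lambda_{0}=0$, sent to $\tfrac{\etr}{2m_{0}}$, and its contribution is exactly the first term above. Hence $\tau_{i}^{(n)}\le\Gamma_{i}^{(n)}$, so $\delta_{i}\le0$; since each $P_{i}^{(n)}\ge0$ this makes $\sum_{i\in\mathcal{V}}\delta_{i}P_{i}^{(n)}$ negative semidefinite, and therefore for any unit vector $\ket{l}$
\begin{equation*}
\bra{l}\delta_{E^{n}}\ket{l}\le\frac{\etr}{2m_{0}}\left(1-\bra{l}(P_{1}+P_{3})^{\otimes n}\ket{l}\right).
\end{equation*}

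Next I would evaluate $P_{1}+P_{3}$ explicitly from \cref{proj:sum} and the projectors $\{P_{i}^{x}\}$ listed in \cref{sec:usedoper}: averaging $P_{1}^{0},P_{1}^{1}$ and $P_{3}^{0},P_{3}^{1}$ kills all off-diagonal entries and gives, in the eigenbasis of $\rho_{E}=\mathrm{diag}(\lambda_{0},\lambda_{1},\lambda_{2},\lambda_{3})$,
\begin{equation*}
P_{1}+P_{3}=\mathrm{diag}\!\left(\tfrac{\lambda_{0}}{\lambda_{0}+\lambda_{1}},\ \tfrac{\lambda_{1}}{\lambda_{0}+\lambda_{1}},\ \tfrac{\lambda_{2}}{\lambda_{2}+\lambda_{3}},\ \tfrac{\lambda_{3}}{\lambda_{2}+\lambda_{3}}\right).
\end{equation*}
With $\lambda_{1}=\lambda_{2}=\lambda_{3}$ these four entries are $\tfrac{\lambda_{0}}{\lambda_{0}+\lambda_{1}},\tfrac{\lambda_{1}}{\lambda_{0}+\lambda_{1}},\tfrac12,\tfrac12$, and the hypothesis $\lambda_{0}>\tfrac12$ (together with $\sum_{i}\lambda_{i}=1$, i.e.\ $\lambda_{1}=\tfrac{1-\lambda_{0}}{3}<\lambda_{0}$) makes $\tfrac{\lambda_{1}}{\lambda_{0}+\lambda_{1}}$ the smallest of the four. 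Since $\rho_{E}^{\otimes n}$, $\tau_{E^{n}}$, and hence $\delta_{E^{n}}$ are all diagonal in the product basis $\{\bigotimes_{p}\ket{j_{p}}\}$ (the $P_{i}$ being diagonal, and $[\delta_{E^{n}},\rho_{E}^{\otimes n}]=0$ by Lemma~\ref{prop:commut}), I may take $\{\ket{l}\}$ to be this common eigenbasis; then $\bra{l}(P_{1}+P_{3})^{\otimes n}\ket{l}=\prod_{p=0}^{n-1}(P_{1}+P_{3})_{j_{p}j_{p}}\ge\left(\tfrac{\lambda_{1}}{\lambda_{0}+\lambda_{1}}\right)^{n}$, and the same lower bound holds on any degenerate eigenspace of $\rho_{E}^{\otimes n}$, since those are spanned by such product vectors. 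Substituting this into the preceding display yields exactly the claimed inequality.

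The matrix averages producing $P_{1}+P_{3}$ and the comparison of its diagonal entries are routine. The part that needs care is the first step: one must read Definition~\ref{def:modsmoothrentwo} closely enough to be certain that the smoothing that builds $\tau_{X^{n}E^{n}}$ never raises a nonzero eigenvalue of $\rho_{XE}^{\otimes n}$ — this is precisely what delivers $\delta_{i}\le0$ and lets the whole sum $\sum_{i\in\mathcal{V}}\delta_{i}P_{i}^{(n)}$ be discarded in the bound — and, relatedly, that the statement is meant relative to a common eigenbasis of $\rho_{E}^{\otimes n}$ and $\delta_{E^{n}}$, which is legitimate only because these operators commute.
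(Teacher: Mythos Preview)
Your proof is correct and follows essentially the same route as the paper's: use the preceding lemma to split $\delta_{E^{n}}$, discard the $\sum_{i\in\mathcal{V}}\delta_{i}P_{i}^{(n)}$ term because each $\delta_{i}\le0$ and $P_{i}^{(n)}\ge0$, and then bound $\bra{l}(P_{1}+P_{3})^{\otimes n}\ket{l}$ from below by its smallest eigenvalue. The only difference is that you spell out in detail what the paper asserts in one line --- the explicit diagonal form of $P_{1}+P_{3}$, the identification of $\tfrac{\lambda_{1}}{\lambda_{0}+\lambda_{1}}$ as the minimal entry under the hypotheses, and the argument from Definition~\ref{def:modsmoothrentwo} that the smoothing never raises a support eigenvalue --- so your write-up is effectively a fleshed-out version of the paper's proof rather than an alternative argument.
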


\begin{proof}
\begin{align}
 \bra{l}\delta_{E^{n}}\ket{l}&\leq\frac{\etr}{2m_{0}} \underset{\{\ket{l}\}}{\mathrm{max}} \bra{l}\left(\one-(P_{1}+P_{3})^{\otimes n}\right)\ket{l} + \nonumber\\ &\quad\quad\quad  +\sum_{i\in\mathcal{V}} \delta_i \bra{l}P^{(n)}_{i}\ket{l}\nonumber\\
 &=\frac{\etr}{2m_{0}}\left(1-\left(\frac{\lambda_{1}}{\lambda_{1}+\lambda_{0}}\right)^{n}\right) + \nonumber\\ &\quad\quad\quad +\sum_{i\in\mathcal{V}} \delta_i \bra{l}P^{(n)}_{i}\ket{l}, \label{proof:prop:max:eq} 
\end{align}
where  $\mathcal{V}:=\{i : \Gamma_{i}^{(n)}\neq 0\}$.
Since the $\delta_{i}$  are  negative or zero and the operators  $P^{(n)}_{i}$ are such that $P^{(n)}_{i}\geq0$, the last term in \cref{proof:prop:max:eq} is negative and then the  proposition follows.
\end{proof}

Concluding, using  Proposition \ref{prop:max}, it is possible to give an upper bound for $S_{0}^{\myeps}(\tau_{E^n})$, in fact
\begin{align}
\bra{l}\tau_{E^n}\ket{l}&=\bra{l}\rho_{E}^{\otimes n}\ket{l}+\bra{l}\delta_{E^n}\ket{l}\nonumber\\
&\leq\bra{l}\rho_{E}^{\otimes n}\ket{l}+\frac{\etr}{2m_{0}}\left(1-\left(\frac{\lambda_{1}}{\lambda_{1}+\lambda_{0}}\right)^{n}\right) \label{eq:remember}.
\end{align}
Substituting in the formula above the actual values for the symmetric six-state protocol provided in equation \eqref{def:lambdas} the proof is concluded.

\subsection{Proof of $S_{2}(\eta_{X^n E^n})\geq \overline{S}_{2}^{\etr}(\rho_{XE}^{\otimes n}) -  \erk$}
\label{app:s2proof}
Using \cref{def:eta}, it follows that

\begin{align*}
	S_{2}(\eta_{X^{n}E^{n}})&=-\log\left[\tr_{X^{n}E^{n}}\left((\one\otimes\soproj )\tau_{X^{n}E^{n}} (\one\otimes \soproj)\right)^{2}\right]+ \\ &\quad\quad\quad+2\log\left(1-\frac{\erk}{2}\right).
\end{align*}

Using the first requirement in Definition~\ref{def:modsmoothrentwo}, the operator $\tau_{X^{n}E^{n}}$ is of the form
\begin{equation*}
\tau_{X^{n}E^{n}} =\frac{1}{2^n} \bigoplus_{x=0}^{2^{n}-1}\tau_{E^n}^{x}.
\end{equation*}

We concentrate on the argument of the logarithm in the first term on right-hand side of $S_{2}(\tau_{X^{n}E^{n}})$
\begin{align*}
\tr_{X^{n}E^{n}}&\left[\left((\one\otimes \soproj)\tau_{X^{n}E^{n}}(\one\otimes \soproj)\right)^{2}\right]\\&=\tr_{X^{n}E^{n}}\left[ \bigoplus_{x=0}^{2^{n}-1}\left(\frac{1}{2^n}\soproj\tau_{E^n}^{x}\soproj\right)^{2}\right]\\
&=\tr_{E^{n}}\left[ \sum_{x=0}^{2^{n}-1}\left(\frac{1}{2^n}\soproj\tau_{E^n}^{x}\soproj\right)^{2}\right]\\
&=\sum_{x=0}^{2^{n}-1}\tr_{E^{n}}\left[ \left(\frac{1}{2^n}\soproj\tau_{E^n}^{x}\soproj\right)^{2}\right]\\
&\leq\sum_{x=0}^{2^{n}-1}\tr_{E^{n}}\left[ \left(\frac{1}{2^n}\tau_{E^n}^{x}\right)^{2}\right]\\
&=\tr_{X^n E^{n}}\left[\left(\tau_{X^n E^n}\right)^{2}\right]
\end{align*}

Taking the first term of the Maclaurin expansion of $\log(1-\frac{\erk}{2})$ for $\erk$ small, we conclude that 
\begin{align*}
S_{2}(\eta_{X^{n}E^{n}})\geq S_{2}(\tau_{X^{n}E^{n}})-\erk.
\end{align*}
Using \cref{eq:def:modsmoothrentwo} the proof is concluded.

\end{appendix}

\bibliography{finitekey}

\end{document}